\pgfplotsset{compat=newest}
\pgfplotsset{plot coordinates/math parser=false}
\newlength\figureheight
\newlength\figurewidth
\newcommand{\R}{\mathbb{R}}
\newcommand{\N}{\mathbb{N}}
\def\1{\raisebox{2pt}{\rm{$\chi$}}}
\theoremstyle{plain}
\newtheorem{lemma}{Lemma}[section]
\newtheorem{remark}{Remark}[section]
\newtheorem{assumption}{Assumption}
\theoremstyle{definition}
\theoremstyle{remark}
\numberwithin{equation}{section}
\begin{document}

\title{Spectral decomposition of atomic structures in heterogeneous cryo-EM}

\author{Carlos Esteve-Yag\"ue}
\email{ce423@cam.ac.uk}
\author{Willem Diepeveen}
\email{wd292@cam.ac.uk}
\address[Carlos Esteve-Yag\"ue, Willem Diepeveen and Carola-Bibiane Sch\"onlieb]{Department of Applied Mathematics and Theoretical Physics, University of Cambridge, United Kingdom.}

\author{Ozan \"Oktem}
\address[Ozan \"Oktem]{Department of Mathematics, KTH, Stockholm, Sweden.}
\email{ozan@kth.se}

\author{Carola-Bibiane Sch\"onlieb}
\email{cbs31@cam.ac.uk}

\thanks{\textbf{Acknowledgements:} The authors acknowledge support from the Alan Turing Institute for the project "Molecular structure from images under physical constraints”. O.  \"Oktem was partially supported by the Swedish Research Council Grant 2020-03107. C.B. Sch\"onlieb acknowledges support from the Philip Leverhulme Prize, the Royal Society Wolfson Fellowship, the EPSRC advanced career fellowship EP/V029428/1, EPSRC grants EP/S026045/1 and EP/T003553/1, EP/N014588/1, EP/T017961/1, the Wellcome Innovator Awards 215733/Z/19/Z and 221633/Z/20/Z, the European Union Horizon 2020 research and innovation programme under the Marie Skodowska-Curie grant agreement No. 777826 NoMADS, the Cantab Capital Institute for the Mathematics of Information and the Alan Turing Institute.}

\begin{abstract}
We consider the problem of recovering the three-dimensional atomic structure of a flexible macromolecule from a heterogeneous cryo-EM dataset. The dataset contains noisy tomographic projections of the electrostatic potential of the macromolecule, taken from different viewing directions,
and in the heterogeneous case, each cryo-EM image corresponds to a different conformation of the macromolecule.
Under the assumption that the macromolecule can be modelled as a chain, or discrete curve (as it is for instance the case for a protein backbone with a single chain of amino-acids), we introduce a method to estimate the deformation of the atomic model with respect to a given conformation, which is assumed to be known a priori.
Our method consists on estimating the torsion and bond angles of the atomic model in each conformation  as a linear combination of the eigenfunctions of the Laplace operator in the manifold of conformations.
These eigenfunctions can be approximated by means of a well-known technique in manifold learning, based on the construction of a graph Laplacian using the cryo-EM dataset.
Finally, we test our approach with synthetic datasets, for which we recover the atomic model of two-dimensional and three-dimensional flexible structures from simulated cryo-EM images.
\end{abstract}

\date{\today}

\maketitle

\section{Introduction}

One of the central problems in the field of structural biology is that of determining the three-dimensional structure and dynamics of biological macromolecules.
It is well-known that the function of biological macromolecules is determined, not only by the chemical composition, but also by the three-dimensional structure. 
Big macromolecules,  such as proteins, can be composed of thousands of atoms and, despite of the physical and biological knowledge about the formation of the bonds between atoms, determining the three-dimensional configuration of big macromolecules turns out to be an extremely complex task.
Moreover,  most of these biological macromolecules are flexible and may deform their structure, adopting different conformations. 
In this case, providing a single conformation of the macromolecule does not solve the problem of determining the three-dimensional structure,
as one would like to provide a full description of all the possible conformations.

In single particle cryogenic electron microscopy (cryo-EM), an aqueous solution containing the macromolecule of interest is rapidly frozen and then imaged by means of a transmission electron microscope.
Each image (micrograph) contains many samples of the macromolecule (particles) at various (unknown) conformations. During the process known as particle picking,  the particles are individually  selected from the micrograph to produce a series of 2D images, each containing the tomographic projection of the electrostatic potential generated by the single particle and its surrounding buffer.
The fact that the molecular sample may contain different conformations of the macromolecule makes single-particle cryo-EM a well-suited experimental technique to study the structural dynamics of the macromolecule.

The image formation in cryo-EM is typically modelled as a parallel beam ray transform (tomographic projection) of the function representing the electrostatic potential of the particle and surrounding buffer along a particular unknown viewing direction. This is followed by a 2D convolution in the detector plane with a point spread function, which models the microscope optics and detector response.
To formalise the above, let $\{u_1, \ldots , u_n\}$ denote the functions representing the electrostatic potential of each of the particles in the molecular sample.
Note that $u_i$ can be viewed as a volumetric density function depending on the specific conformation and the orientation of each particle.
The corresponding cryo-EM images $\{Y_1, \ldots , Y_n\} \subset \R^{N\times N}$ can be modelled\footnote{We do not include the rotation and the spatial translation of the particle density in the forward operator, as it is usual in the cryo-EM literature. As we will see in the sequel, it is more convenient for us to assume that the orientation and the spatial location of the particle, as well as its specific conformation, are encoded in the particle density function $u_i$. } as
\begin{equation}
\label{cryo-EM forward operator intro}
Y_i = h_i\ast T(u_i) + \varsigma_i
\end{equation}
where $T$ is the digitized parallel beam ray transform taken along the microscope optical axis,
and $h_i\ast$ is the 2D convolution in the image plane with the so-called \textit{point spread function} (PSF) $h_i: \R^2 \to \R$, which is given analytically by its Fourier transform:
\begin{equation}
\label{CTF intro}
\hat{h}_i (\xi) = - A(\xi) \left(  \sqrt{1-\alpha} \sin \left( \dfrac{\Delta z}{2}\lambda |\xi|^2 - \dfrac{C_s}{4}\lambda^3 |\xi|^4 \right) 
+ \alpha \cos \left( \dfrac{\Delta z}{2} \lambda  |\xi|^2 - \dfrac{C_s}{4} \lambda^3 |\xi|^4\right) \right).
\end{equation}
Here, $0<\alpha <1$ is the \textit{amplitude contrast ratio},
 $\Delta z$ is the defocus, $C_s$ is the spherical aberration, and $\lambda$ is the (relativistically corrected) wavelength of the imaging electron.
The function $A:\R^2 \to \R$ represents the aperture function, which is commonly the characteristic function of a disc centred at the optical axis, with radius given by the  objective aperture of the electron microscope.
The function $\hat{h}_i$ in \eqref{CTF intro} is known in the literature as the Contrast Transfer Function (CTF).
Note that, due to the zero crossings of the CTF, the information associated to the frequencies in which $\hat{h}_i$ vanishes is lost.

In \eqref{cryo-EM forward operator intro}, $\varsigma_i$ is typically taken from  $(N\times N)$-dimensional normal distribution, with $N\times N$ denoting the resolution in pixels of the 2D cryo-EM images.
A detailed description of the image formation in cryo-EM can be found in
\cite{frank2006three,vulovic2013image}.
It is worth noting that, in order to avoid the damage of the molecule, the electron dose of the microscope is kept very low, resulting in micrographs dominated by noise (see Figure \ref{fig: 3d struct data}).

\subsection{Related work}

As already indicated, the goal in single particle cryo-EM is to recover the 3D structure of a macromolecule from noisy tomographic projections of single particles, with the difficulty of not knowing the orientation of the particle in each projection.
This problem has attracted a lot of attention in the last decade, and many methods have been proposed to address it, most of them under the assumption of having an homogeneous sample \cite{frank2006three, barnett2017rapid, cheng2015primer, milne2013cryo, vinothkumar2016single}, where all the tomographic projections in the dataset are generated by particles with identical 3D structure, i.e. the macromolecule has a single conformation.
The prevalent method nowadays is the Bayesian approach,  first introduced in \cite{sigworth1998maximum}, and further developed in \cite{scheres2012bayesian}, in which a probability distribution for the viewing directions of the tomographic projections and the density function producing these projections are estimated in an alternating manner.
However,  large macromolecules tend to be flexible, and therefore, the homogeneity assumption does not hold in those cases.
Some methods have been proposed in the last years to recover the structural heterogeneity of macromolecules from single particle cryo-EM data.
Here, we must make a distinction between two types of heterogeneity, namely, discrete heterogeneity, in which the molecular sample contains a small number of different conformations (typically two or three); and continuous heterogeneity, in which the 3D structure varies continuously, and the conformations found in the molecular sample can be seen as an approximation of a continuous low-dimensional manifold.
For the case of discrete heterogeneity, there are several software packages available, such as RELION \cite{scheres2012relion}, cryoSPARC \cite{punjani2017cryosparc}, FREALIGN \cite{lyumkis2013likelihood} and cisTEM \cite{grant2018cistem}, which classify the 2D cryo-EM images in clusters depending on the conformation and estimate the 3D electrostatic potential corresponding to each cluster.

For the case of continuous heterogeneity,  which is the one that we address in this work, an existing approach consists in performing a Principal Component Analysis (PCA) of the 3D electrostatic potentials, represented in an  $N'\times N'\times N'$ voxel grid \cite{penczek2002variance, penczek2011identifying, penczek2006estimation,liao2010classification}, with typically $N'\ll N$.
See also the more recent works \cite{katsevich2015covariance, anden2015covariance, anden2018structural} for a variant of this method,
which has shown to be able to recover the continuous heterogeneity in some molecular samples.
However, as discussed in \cite[subsection 3.3]{moscovich2020cryo}, this method is limited to low-resolution reconstructions of the 3D structure.
More recent works address the continuous heterogeneity by using Deep Neural Networks. For instance,  CryoDRGN \cite{zhong2021cryodrgn} uses a Variational Auto-Encoder (VAE) to estimate the density map corresponding to each conformation in the heterogeneous cryo-EM dataset.
This is further developed in \cite{rosenbaum2021inferring},  which uses a VAE approach to recover the continuous heterogeneity, with the aim of reconstructing, not the volume density, but the atomic model associated to each conformation.

In  \cite{moscovich2020cryo}, an interesting method based on manifold learning is proposed to increase the resolution of 3D reconstructions of continuously heterogeneous macromolecules.
Their idea consists in using the low-resolution density representation of the particles (obtained for instance by the methods in \cite{anden2018structural}) to construct a graph Laplacian,  which is then used in lieu of the Laplace-Beltrami operator  to approximate the spectral properties of the manifold of conformations $\mathcal{M}$.
Then,  a high-resolution volumetric voxel representation of the 3D structure is estimated as a linear combination of the eigenvectors associated to the smallest eigenvalues of the graph Laplacian.
The coefficients in the spectral decomposition of each voxel are estimated by solving a least squares problem, using the forward operator applied to the reconstructed densities, and comparing it with the noisy images in the cryo-EM dataset.
These coefficients give a 3D density map associated to each of the first eigenvalues that, in \cite{moscovich2020cryo}, are referred to as eigenvolumes.

\begin{figure}
\centering
\begin{subfigure}[c]{.35\textwidth}
\centering
\includegraphics[scale=.8]{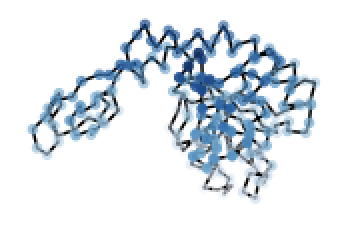}
\end{subfigure}
\hfill
\begin{subfigure}[c]{.6\textwidth}
\includegraphics[scale=.6]{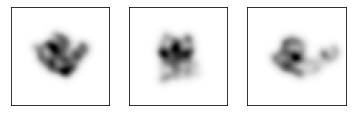}

\includegraphics[scale=.6]{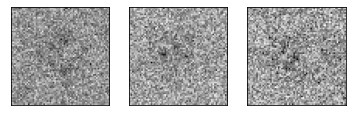}
\end{subfigure}
\caption{At the left we see the coarse-grained model of the adenylate kinase protein representing its backbone in a specific conformation. At the right, we see the 2D projections of the 3D electrostatic potential of three particles from different viewing directions and  different conformations. The images in the first row are the clean projections, and the images in the second row are the same images after a 2D convolution with the PSF and the addition of Gaussian noise.  The signal to noise ratio in these examples is 0.0548.}
\label{fig: 3d struct data}
\end{figure}

\subsection{Our contribution}

In this work, we propose a variant of the method introduced in \cite{moscovich2020cryo}  to address the problem of recovering the 3D structure in a heterogeneous cryo-EM dataset.
The main difference with respect to \cite{moscovich2020cryo} is that, instead of estimating a volumetric representation of the macromolecule,  we rather aim at recovering the atomic model corresponding to each conformation, i.e. the 3D configuration of the atoms.
In order to apply our method we need to make two important assumptions concerning the prior knowledge about the structure of the macromolecule:
\begin{enumerate} 
\item We consider that the backbone of the macromolecule is composed by a fixed number of atoms forming a chain, or discrete curve,  in which the distance between adjacent atoms is constant.
\item We assume that we have access to the atomic model of the macromolecule corresponding to a specific conformation.
\end{enumerate}

Regarding the first assumption,  although most of the macromolecules have more complex structures,  many of them, as for instance proteins, can be approximated by means of the so-called backbone, a discrete curve determining the positions of the C-$\alpha$ atoms (see Remark \ref{rmk: on the chain molecular structure}). 
As for the second assumption, the atomic model for many macromolecules is available in existent databases.
For the particular case of proteins, the recently developed deep learning based approach AlphaFold \cite{jumper2021highly, tunyasuvunakool2021highly, varadi2022alphafold} can provide highly accurate predictions of the 3D structure of proteins from their primary sequence (sequence of amino-acids).
However, the structures in the databases typically contain the information regarding the macromolecule in a single conformation.
In this work,  our goal is, indeed, to use the cryo-EM data to estimate  the deformation of the atomic model in each particle of the molecular sample with respect to the known conformation (see Figure \ref{fig: atomic model pred intro}).

\begin{remark}
\label{rmk: on the chain molecular structure}
An atomic model in which the atoms form a discrete curve can be used in practice as an approximation of the 3D structure of a protein with a single chain of amino-acids.
A protein is a macromolecule structure consisting of one or more chains of amino-acid residues, which are connected by peptide bonds, forming the so-called backbone.
The central atom in each amino-acid residue is known as the C-$\alpha$ atom, and connects the side chain of the amino-acid residue to the backbone.
One way to approximate the 3D structure of the protein is by determining the C-$\alpha$ positions conforming the backbone, which form, indeed, a discrete curve satisfying our assumption \cite{hu2011discrete}.
\end{remark}

In the following, we offer a more formal description of our approach.
See Figure \ref{fig: diagram} for a diagram outlining our method.
The main idea  is to use prior knowledge about the atomic structure to introduce a parametrisation of the space of atomic models satisfying such prior.
In our case,  we assume that the molecule of interest can be approximated by its backbone, which can be modelled as a discrete curve,  i.e.  a sequence of points $\mathbf{z}:= \{ z_1, z_2, \ldots , z_m\} \in \R^{3m}$, where $m\in \N$ denotes the number of atoms in the model.
As we will see in subsection \ref{subsec: structure spectral decomp}, such 3D structures can be represented, up to translations and rotations, by the \emph{torsion} and \emph{bond} angles  at each point of the discrete curve, that we denote by 
$\Theta = (\theta_1, \theta_2,\ldots, \theta_{m-2})\in \R^{m-2}$ and $\Psi = (\psi_1, \psi_2,\ldots , \theta_{m-2})\in \R^{m-2}$ respectively.
In order to determine the conformation of the macromolecule for each particle $u_i$, we estimate the parameters $\Theta_i$ and $\Psi_i$ as 
\begin{equation}
\label{spectral decomp intro}
\Theta_i = \Theta_i (A) : = \Theta_0 +  \sum_{k=0}^{K-1} \mathbf{a}_k \phi_i^{(k)}
\qquad \text{and} \qquad
\Psi_i = \Psi_i(B) := \Psi_0 +  \sum_{k=0}^{K-1} \mathbf{b}_k \phi_i^{(k)},
\end{equation}
where the vectors $\Theta_0\in \R^{m-2}$ and $\Psi_0\in \R^{m-2}$ represent the torsion and bond angles of the atomic model in the known conformation, and the coefficient matrices 
$$
A = [\mathbf{a}_0, \mathbf{a}_1, \ldots , \mathbf{a}_{K-1}] \in \R^{(m-2)\times K} \qquad \text{and} \qquad B=[\mathbf{b}_0, \mathbf{b}_1, \ldots , \mathbf{b}_{K-1}] \in \R^{(m-2)\times K} 
$$ 
are to be estimated from the noisy cryo-EM images. 
Note that each column in the matrices $A$ and  $B$ is a vector of coefficients $\mathbf{a}_k\in \R^{m-2}$ and $\mathbf{b}_k \in \R^{m-2}$ respectively.
Following the same ideas as in \cite{moscovich2020cryo},  the vectors $\{ \phi^{(k)}\}_{k=0}^{K-1}$ defined as
$$\phi^{(k)} = \left(\phi_1^{(k)}, \phi_2^{(k)}, \ldots , \phi_n^{(k)}\right)\in \R^n \qquad  \forall k\in \{0,1,\ldots, K-1\}$$ 
that we use in \eqref{spectral decomp intro} are the eigenvectors associated to the $K$ smallest eigenvalues of a graph Laplacian matrix, which is used to approximate the spectral properties of the unknown manifold of conformations $\mathcal{M}$ (see Figures \ref{fig: eigenvec 2d}, \ref{fig: eigenvec 3d} and \ref{fig: eigenvec 3d 2} for an illustration). 
The underlying idea, as presented in \cite{moscovich2020cryo},  is that each node in the graph represents an image of the dataset, and the weights of the edges between the nodes represent the similarity between the conformations of the underlying particles.
The eigenvectors of the graph Laplacian are then used as an approximation of the eigenfunctions of the Laplace-Beltrami operator defined on the (unknown) manifold of conformations.
See subsection
\ref{subsec: structure spectral decomp} for further details about the spectral decomposition of the  atomic model.

A critical point in this approach is the construction of the aforementioned weighted graph, or more precisely, the way in which one estimates the similarity between the conformation in each cryo-EM image.
One way to do so is by using a low-dimension representation of the particles, which can be obtained from different existing techniques. Namely, by means of a low-resolution PCA reconstruction \cite{anden2018structural}, or the so-called latent space representation obtained from a variational autoencoder \cite{rosenbaum2021inferring,zhong2021cryodrgn}.
See Appendix  \ref{appdx: manifold spectral representation} for a discussion about the construction of the graph of similarities.

The number $K\in \mathbb{N}$ of eigenvectors in the spectral decomposition is a hyper-parameter that can be chosen depending on the flexibility of the structure, or the type of deformations that we aim to recover.
Since we use only the terms in the spectral decomposition associated to the smallest eigenvalues of the graph Laplacian, our method is able to capture only the low-frequency deformations of the backbone. The deformations of high frequency, typically associated to small vibrations of the atoms, are not recovered by our method (see Remark \ref{rmk: high ferquencies}).
In our numerical experiments in section \ref{sec: numerical experiments} we have used $K=10$ and $K=20$ respectively.

\begin{figure}
\centering
\begin{subfigure}[c]{.3\textwidth}
\centering
\includegraphics[scale=.6]{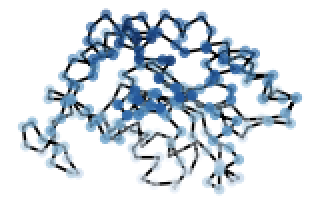}
\caption{Known conformation}
\end{subfigure}
\hfill
\begin{subfigure}[c]{.3\textwidth}
\centering
\includegraphics[scale=.6]{Figures_new/fig_3d_1/ground_truth.png}
\caption{Ground truth}
\end{subfigure}
\hfill
\begin{subfigure}[c]{.3\textwidth}
\centering
\includegraphics[scale=.6]{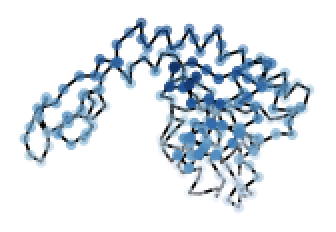}
\caption{Estimated atomic model}
\end{subfigure}
\caption{At the left, the backbone of the macromolecule in the known conformation, which is used as initial guess for all the particles. In the middle, the specific conformation of one of the particles of the synthetic dataset. At the right, the atomic model estimated by means of our method.}
\label{fig: atomic model pred intro}
\end{figure}

Finally, we need to estimate the coefficients $\mathbf{a}_k$ and $\mathbf{b}_k$ in the spectral decomposition \eqref{spectral decomp intro}.
This is done in subsection \ref{subsec: tomographic reconstruction} by means of a minimisation problem,  evaluating the data fidelity of the estimated atomic model for each particle with the corresponding cryo-EM image.
Let us briefly describe how this optimisation problem is formulated.
Under the assumption that the 3D structure can be described as a discrete curve in $\R^3$,  and that the spatial location and orientation of the particles, represented by $\{(\hat{z}_{i},\hat{F}_{i})\}_{i=1}^n \in [\R^3 \times SO(3)]^n$, are known\footnote{In the original cryo-EM problem, the orientation of the particles is not available in general, however, since we are interested here on recovering the structural heterogeneity, we may assume that the orientations have been accurately estimated, previously, by means of existent methods as for instance RELION \cite{scheres2012relion}}, we estimate the 3D density associated to the $i$-th particle as a sum of Gaussian densities centred at the positions of the atoms in the $i$-th particle, i.e.
$$
x\in \R^3 \longmapsto \hat{U}_i (A,B) (x) = \sum_{z \in \Gamma_i (A,B)} \nu \exp \left[ - \dfrac{\| x- z\|^2}{2\sigma^2} \right]. 
$$ 
Here, $\nu, \sigma>0$ are hyperparameters to be chosen a priori,  and, 
for each $i\in \{ 1,2, \ldots , n\}$ the point cloud $ \Gamma_i (A,B) = \mathbf{Z} (\Theta_i (A), \Psi_i (B), \hat{z}_{i}, \hat{F}_{i})\in \R^{3m}$ are the estimated atom positions of the $i$-th particle, obtained as the solution of\footnote{The product $e_3 F_j$ denotes the matrix multiplication of the row vector $e_3 = (0,0,1)$ by the square matrix $F_j\in SO(3)$. Hence, $e_3F_j$ simply represents the third row of the matrix $F_j$.}
$$
\begin{cases}
z_{j+1} = z_j + \delta e_3 F_j & j\in \{ 1, \ldots , m-1\} \\
F_j = R (\theta_{ij}(A), \psi_{ij}(B)) F_{j-1} & j\in \{ 2, \ldots , m-2\} \\
\text{with} \ z_{j_0}=\hat{z} \in \R^3 \ 
\text{and} \ F_{j_0} = \hat{F} \in SO(3),
\end{cases}
$$
where $R(\theta, \psi)\in SO(3)$ is the rotation matrix of angles $\theta$ and $\psi$ (see Appendix \ref{appdx: atomic model param} for further details).
The functions 
$$
\Theta_i(\cdot) = [\theta_{i1} (\cdot), \theta_{i2}(\cdot), \ldots]\in \R^{m-2}
 \quad \text{and} \quad 
 \Psi_i (\cdot) =  [\psi_{i1} (\cdot), \psi_{i2}(\cdot), \ldots]\in \R^{m-2}
 $$ in the definition of $\Gamma_i (A,B)$ are given by \eqref{spectral decomp intro}.
 Then, the estimated cryo-EM images are obtained by applying the forward operator to the estimated volume densities, i.e.
 $$
 \hat{Y}_i (A,B) =  h_i \ast T [\hat{U}_i (A,B)] \in \R^{N\times N}, \qquad \text{for} \ i\in \{1, \ldots , n\},
 $$
 where $T(\cdot)$ is the parallel beam ray transform and $h_i$ is the PSF.
We can now formulate the following least squares problem to estimate the parameters 
$$
A = [\mathbf{a}_0, \mathbf{a}_1,  \ldots , \mathbf{a}_{K-1}]\in \R^{(m-2)\times K}
\quad \text{and} \quad
B = [\mathbf{b}_0, \mathbf{b}_1, \ldots , \mathbf{b}_{K-1}]\in \R^{(m-2)\times K}
$$ 
in \eqref{spectral decomp intro} as
$$
[\hat{A} , \hat{B}] :=
\underset{A, B}{\operatorname{argmin}}  \dfrac{1}{n} \sum_{i=1}^n \| \ \hat{Y}_i (A,B) - Y_i \|^2,
$$
where $\{ Y_i\}_{i=1}^n$ are the 2D images in the cryo-EM dataset.

This is a non-convex minimisation problem, and we may use stochastic gradient descent (SGD) to approximate a solution.
The SGD algorithm is implemented by randomly dividing the set of images in batches and, at each step, computing the gradient using only the data in one of the batches.  This procedure is iterated through several epochs, after which the partition in batches is re-shuffled.
As it is usual when applying iterative methods to approximate the solution of a non-convex problem, the success of the method heavily depends on the initialisation of the parameters, in this case the vectors $\mathbf{a}_k$ and $\mathbf{b}_k$ in \eqref{spectral decomp intro}.
Here is where the known conformation of the macromolecule, represented by the parameters $(\Theta_0, \Psi_0)$, becomes very important.
Indeed, for the initialisation of the iterative method, we  simply set all the parameters $\mathbf{a}_k$ and $\mathbf{b}_k$ to be equal to $0$, so that the prediction of the atomic model for all the particles at the initialisation is the known conformation.
This provides a reasonable initial guess for the atomic model of each particle, which is then deformed, changing the conformation of the macromolecule, in order to improve the predictions to better fit the cryo-EM dataset.

In section \ref{sec: numerical experiments} we present several numerical experiments in which our method is used to reconstruct the heterogeneous atomic model of flexible 2D and 3D structures from noisy tomographic projections.  
Although it is well-known that the tomographic reconstruction with unknown viewing directions presents fundamental differences (concerning the uniqueness of solution) between the 2D \cite{basu2000uniqueness,basu2000feasibility} and the 3D \cite{kurlberg2021formal,van1987angular} setting, these differences are not relevant in our numerical experiments, in which we assume that the pose of the particle in each cryo-EM image has been estimated a priori, and moreover, our reconstruction is based on estimating atom positions, rather than estimating the underlying density. 
In the first experiment (subsection \ref{subsec: example 2d}), we consider a 2D structure in which the discrete curve forms a flexible box with two moving arms. In this case, the density functions associated to the atomic structure are 2D images, and the associated tomographic projections are therefore 1D functions (see figure \ref{fig: 2d struct data}).
In subsection \ref{subsec: example3d},  we apply our method to recover the continuous heterogeneity of two 3D structures from an initial conformation.  In both cases,  a synthetic dataset is generated, in which the ground truth  of the different conformations are simulated trajectories using Molecular Dynamics (MD).
In the first experiment (subsection \ref{subsubsec: example 3d 1}),
the MD simulation is taken from \cite{Beckstein2018}, and corresponds to a closed-to-open transition of the protein adenylate kinase.
In the second experiment (subsection \ref{subsubsec: example 3d 2}) the MD simulation is taken from \cite{shaw2020molecular}, an represents a $25\mu s$ trajectory of SARS-CoV-2 helicase nsp13 starting from open conformation.

The code used for the numerical experiments in this paper is publicly available at 

\href{https://github.com/carlosesteveyague/Atomic-structure-reconstruction-for-cryoEM}{https://github.com/carlosesteveyague/Atomic-structure-reconstruction-for-cryoEM}

The rest of the paper is structured as follows:
\begin{enumerate}
\item In section \ref{sec: the method} we describe our method in detail.
First, we introduce in subsection \ref{subsec: structure spectral decomp} the spectral decomposition of atomic structures satisfying the aforementioned discrete curve property.
In subsection \ref{subsec: approx spectr decomp}, we describe the method, based on manifold learning, to approximate the spectral decomposition by means of the cryo-EM dataset.
In subsection \ref{subsec: tomographic reconstruction}, we formulate the minimisation problem which is used to estimate the coefficients in the spectral decomposition.
\item In section \ref{sec: numerical experiments} we present some numerical experiments in which we use our method to carry out the tomographic reconstruction of 2D and 3D atomic structures.
\item In Section \ref{sec: conclusions and future}, we sum-up the conclusions of our work and describe the possible steps to be taken in view of using our method as one of the steps in the 3D reconstruction pipeline of heterogeneous macromolecules from real cryo-EM data. 
\item At the end, we include Appendix \ref{appdx: atomic model param}, where we describe the parametrisation of discrete curves, based on a discrete version of the Frenet frames \cite{hu2011discrete}, that we use throughout the paper; and Appendix \ref{appdx: manifold spectral representation}, where we describe the methods, taken from \cite{anden2018structural} and \cite{moscovich2020cryo}, to construct a low-dimension representation of the particles from the cryo-EM dataset, which is then used to build the graph Laplacian that we use in our method.
\end{enumerate}

\section{The method}
\label{sec: the method}

In this section we describe our method in detail. See figure \ref{fig: diagram} for a high-level description of the method.
We start by introducing the spectral decomposition of the atomic structure (subsection \ref{subsec: structure spectral decomp}).
Then, we describe the method to approximate this spectral decomposition (subsection \ref{subsec: approx spectr decomp}).
Finally, we formulate the minimisation problem that we use to estimate the coefficients in the spectral decomposition (subsection \ref{subsec: tomographic reconstruction}).

\begin{figure}
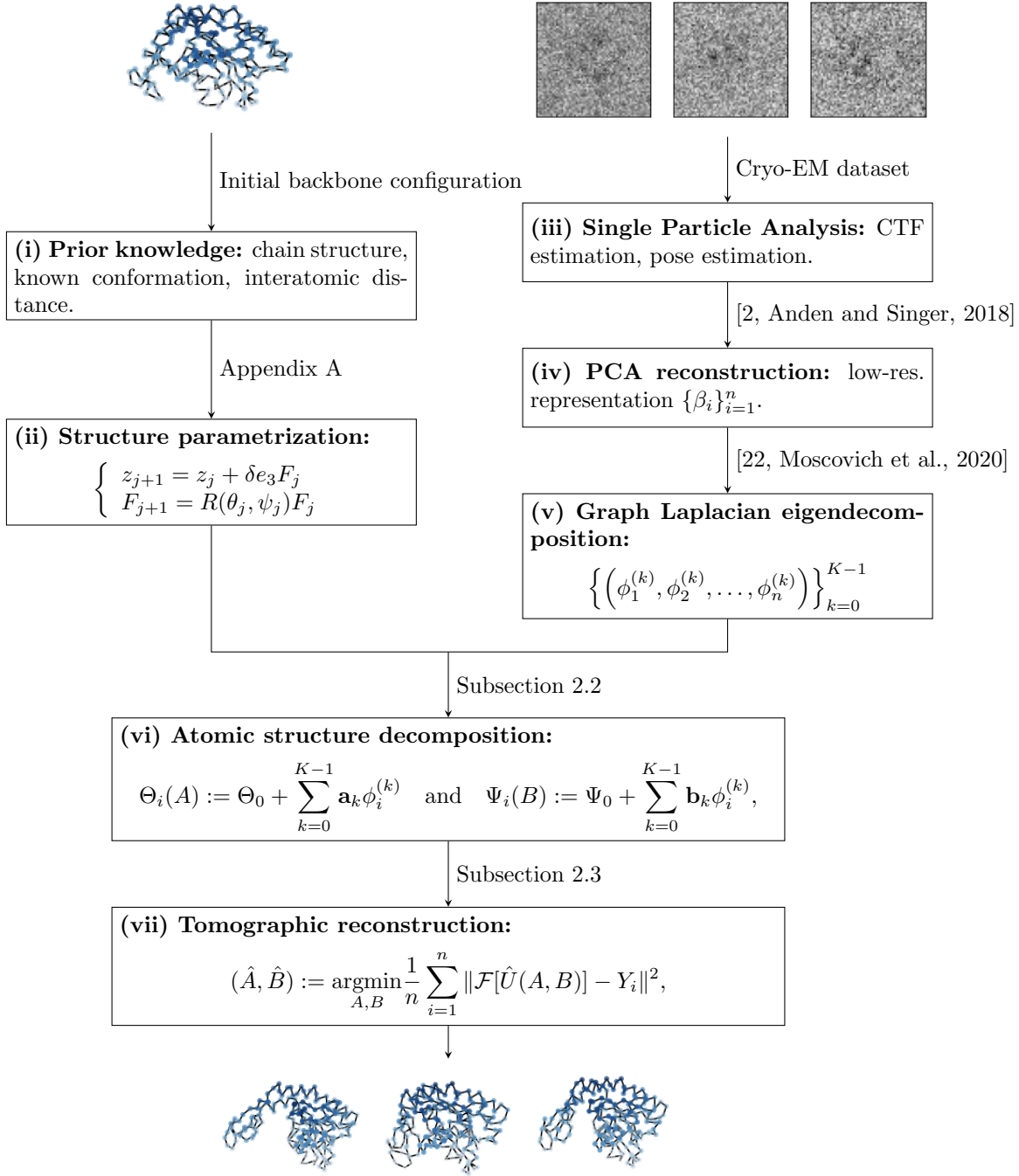

\centering
\include{diagram/cryo-EM_diagram}
\caption{Diagram of our method. We observe that it combines prior biological knowledge about the macromolecule (in the left column) and the experimental data from cryo-EM (in the right column). The atomic structure decomposition in (vi) is constructed using the prior knowledge, whereas the coefficients are estimated by solving the minimisation problem in (vii), using the cryo-EM images $\{ Y_i \}_{i=1}^n$.}
\label{fig: diagram}
\end{figure}

\subsection{Spectral decomposition of the atomic structure}
\label{subsec: structure spectral decomp}

As mentioned in the introduction, our approach relies on the fact that we have prior knowledge about the structure of the molecule. 
This knowledge consists on knowing the number of atoms of the backbone and having access to the 3D structure of the macromolecule in some specific conformation (this one can be obtained by using for instance AlphaFold \cite{jumper2021highly}).
These assumptions will be crucial later on to parametrise the set of possible atomic models and carry out the spectral decomposition in the parameter space.
In particular,  we make the following structural assumption:

\begin{assumption}
\label{assumption: discrete curve}
The backbone of the macromolecule has a fixed number of $m\in \N$ atoms forming a chain (or discrete curve), i.e.  there exists $\delta>0$ such that the point cloud $\{ z_1, z_2, \ldots , z_m\}\in \R^{3m}$ representing the positions of the atoms satisfies
$$
\| z_{j+1} - z_j\| \approx \delta, \qquad \forall j\in \{1, \ldots , m-1\}. 
$$
Moreover, we assume that the set of conformations of the macromolecule, that will be denoted by $\mathcal{M}$, is a connected compact manifold of dimension  $d\geq 1$, that can be embedded in a higher dimensional Euclidean space $\R^{q}$, with $q\geq d$.
\end{assumption}

Under this assumption, and considering that we know the number of atoms $m\in \N$ and the distance $\delta >0$ between adjacent atoms,  we will see in the appendix \ref{appdx: atomic model param} that the atomic model associated to any conformation can be represented as the solution of a dynamical system of the form
\begin{equation}
\label{dynamical system}
\begin{cases}
z_{j+1} = z_j + \delta e_3 F_j & j\in \{ 1, \ldots , m-1\} \\
F_{j+1} = R (\theta_j, \psi_j) F_j & j\in \{ 1, \ldots , m-2\} \\
\text{with} \ z_{j_0} = \hat{z} \in \R^3 \ 
\text{and} \ F_{j_0} = \hat{F} \in SO(3).
\end{cases}
\end{equation}
where $e_3 = (0,0,1)\in \R^3$, the sequence $\{z_1, \ldots , z_m  \}\subset \R^{3m}$ denotes the positions of the atoms in the macromolecule, and $\{ F_1, \ldots , F_{m-1}\} \subset SO(3)^{m-1}$ are the so-called Discrete Frenet Frames (see the appendix \ref{appdx: atomic model param} and also \cite{hu2011discrete} for further details). The inter-atomic distance $\delta>0$ is fixed and given,  $R (\theta_j, \psi_j)\in SO(3)$ denotes the rotation matrix in $\R^3$ with angles $\theta_j$ and $\psi_j$, i.e.
$$
R (\theta_j, \psi_j) = 
\left(\begin{matrix}
\cos \psi_j \cos \theta_j & \cos \psi_j \sin \theta_j & - \sin \psi_j \\
-\sin \theta_j & \cos \theta_j & 0 \\
\sin \psi_j \cos \theta_j & \sin \psi_j \sin \theta_j & \cos \psi_j
\end{matrix}\right),
$$
and the index $j_0 \in \{1, \ldots, m-1\}$ in \eqref{dynamical system} is chosen a priori (typically in the middle of the chain).
We observe that the space of such structures can be parametrised by the rotation angles $(\theta_j,\psi_j)$ at each point of the discrete curve, along with the initial condition $(\hat{z}, \hat{F})$, which determines the spatial position and the orientation of the particle.
Let us define the parameter space of such discrete curves as
\begin{equation}
\label{param atomic model}
(\Theta,  \Psi,  \hat{z},  \hat{F}) \in \mathcal{A} \times \mathcal{I} := \left( [-\pi, \pi]^{m-2}\times [-\pi,\pi]^{m-2} \right)  \times \left( \R^3 \times SO(3)\right),
\end{equation}
where $\Theta = (\theta_1, \ldots, \theta_{m-2})$ and $\Psi = (\psi_1, \ldots , \psi_{m-2})$
represent the torsion and bond angles at each point in the curve, 
and $\hat{z}$ and $\hat{F}$ determine the position and the orientation of the structure at the reference atom $j_0$, that is chosen a priori, typically in the central part of the macromolecule.
See more details about the parametrisation of the atomic model in the appendix \ref{appdx: atomic model param}.

The choice of the reference atom is arbitrary, but it is convenient to choose a point in a part of the structure which is invariant in all the conformations, i.e. we want $z_{j_0}$ and $F_{j_0}$ to depend only on the location and the orientation of the particle but not on the conformation.
Choosing as reference an atom in a flexible part of the macromolecule would imply that the parameters $\hat{z}$ and $\hat{F}$ are also a function of the conformation, and therefore, these parameters should also be included in the spectral decomposition \eqref{spectral decomp intro}.

In the parametrisation  \eqref{dynamical system}-\eqref{param atomic model} of the atomic model we observe, on one hand, that the parameters $\hat{z}$ and $\hat{F}$ only determine the spatial location and the orientation of the particle.
On the other hand,  $\Theta$ and $\Psi$, which are invariant under rotations and translations of the particle, determine the 3D structure up to rotations and translations.
Roughly speaking, they determine the shape of the curve.
The parameters $\Theta$ and $\Psi$  are therefore the relevant parameters to describe the structural heterogeneity of the macromolecule, as modifying these parameters translates into deformations of the shape of the curve. Moreover, we stress that these deformations are translation and rotation invariant.
In our numerical experiments in section \ref{sec: numerical experiments}, we shall assume the knowledge of the location and the orientation of the particles in the cryo-EM dataset, represented by the knowledge of $\hat{z}$ and $\hat{F}$ for each of the particles, so that the tomographic reconstruction reduces to estimate the parameters $\Theta$ and $\Psi$ for each particle.

Denoting by $\mathcal{M}$ the set of possible conformations, our goal is to construct a map
\begin{equation}
\label{conformation to atomic model map}
\eta \in \mathcal{M} \longmapsto [\Theta(\eta) , \Psi(\eta) ] \in 
\mathcal{A} := [-\pi, \pi]^{m-2}\times [-\pi,\pi]^{m-2},
\end{equation}
which maps any element in $\mathcal{M}$ to the associated parameters $\Theta$ and $\Psi$, which determine a unique atomic model satisfying Assumption \ref{assumption: discrete curve}, up to rotations and translations.
As mentioned in the introduction,  a key reason of the success of our method relies on having access to the atomic model of the macromolecule in some specific conformation. This conformation will be used to initialise the iterative algorithm for the tomographic reconstruction of the 3D structure of each particle.

\begin{assumption}
\label{assumption: known conformation}
We consider that we know the atomic model of a single conformation of the macromolecule, and therefore, we can extract the number of atoms $m\in \N$ in the chain, the distance between adjacent atoms $\delta>0$, and compute the torsion and bond angles denoted by $(\Theta_0, \Psi_0)\in \mathcal{A}$.
\end{assumption}

As indicated in the introduction, we are interested in estimating the difference between the known atomic model and the atomic model associated to each conformation, i.e.,  $\Theta(\eta) - \Theta_0$ and $\Psi(\eta) - \Psi_0$ for every $\eta \in \mathcal{M}$.
Following a similar idea as in \cite{moscovich2020cryo}, 
we construct the functions $\Theta(\cdot)$ and $\Psi(\cdot)$ as the linear combination of a finite number of elements in a basis of $L^2(\mathcal{M})$ as follows:
\begin{equation}
\label{spectral decomp}
\Theta (\eta) = \Theta_0 +  \sum_{k=0}^{K-1} \mathbf{a}_k \phi_k(\eta)
\qquad \text{and} \qquad
\Psi (\eta) = \Psi_0 +  \sum_{k=0}^{K-1} \mathbf{b}_k \phi_k (\eta),
\end{equation}
where $K\in \mathbb{N}$ is chosen a priori, $\phi_k(\cdot):\mathcal{M} \to \R$, for $k\in \{0,1, \ldots, K-1\}$ are the eigenfunctions associated to the $K$ smallest eigenvalues of the Laplace-Beltrami operator in $\mathcal{M}$,
and $\mathbf{a}_k\in \R^{m-2}$ and $\mathbf{b}_k\in \R^{m-2}$ for $k\in \{0, 1, \ldots , K-1 \}$ are the coefficients of the spectral decomposition of the atomic structure.

\begin{remark}
\label{rmk: initial conformation}
We see in \eqref{spectral decomp} that we are actually estimating the deformation of the 3D structure with respect to the given known conformation.
As we will see in the sequel, this assumption about the known conformation is extremely important in the estimation of the coefficients $\mathbf{a}_k$ and $\mathbf{b}_k$ in \eqref{spectral decomp}.
Indeed, the estimation of $\mathbf{a}_k$ and $\mathbf{b}_k$ is addressed by means of SGD applied to a non-convex minimisation problem.
As initialisation we take all the coefficients $\mathbf{a}_k$ and $\mathbf{b}_k$ to be equal to $0$, so that the initial guess of the atomic model for any particle coincides with the known conformation.
This produces reasonable predictions for all the particles that, by applying SGD, are then deformed to fit the cryo-EM data and  obtain the atomic model approximating the conformation of each particle.
\end{remark}

\begin{remark}
\label{rmk: high ferquencies}
 We recall, from Assumption \ref{assumption: discrete curve},  that $\mathcal{M}$ is a connected compact manifold, and hence, the eigenfunctions of the Laplace-Beltrami operator in $\mathcal{M}$ form a complete orthonormal basis of $L^2(\mathcal{M})$.
Moreover, it is well-known \cite{berger1971spectre, grebenkov2013geometrical} that the 
eigenvalues $0= \lambda_0 < \lambda_1 \leq \lambda_2 \leq \ldots$ give an estimate of the regularity of the associated eigenfunctions $\{\phi_0(\cdot), \phi_1(\cdot),  \phi_2(\cdot), \ldots\}$.
This can be interpreted as follows:
in the expansion of any smooth function in $\mathcal{M}$ as a linear combination of the eigenfunctions, the terms associated to higher eigenvalues correspond to higher frequencies of the function.
By using only the eigenfunctions associated to the smaller eigenvalues in the spectral decomposition \eqref{spectral decomp},  we only capture deformations of the backbone of low frequency in $\mathcal{M}$.
We note that in this work, we aim at describing only the large deformations of the macromolecule, and we are not interested in high-frequency deformations, which are typically associated to small vibrations of the atoms.
\end{remark}

\begin{remark} 
Another interesting feature of our approach is the fact that we can use prior knowledge about the macromolecule to reduce the number of coefficients to be estimated in \eqref{spectral decomp intro}.
Some parts of the macromolecules may be known to be very stable, so that their internal structure is preserved in all the conformations. This is the case, for instance, of the so-called secondary structures appearing in proteins ($\alpha$-helices and $\beta$-strands).
These substructures only suffer rigid transformations (rotations and translations) from one conformation to another, and then the functions $\theta_i(\eta)$ and $\psi_i(\eta)$ associated to the atoms conforming these substructures should be constant in  $\mathcal{M}$.
This can be achieved by using only the first term in the expansion \eqref{spectral decomp intro} (i.e. we set the parameters $a_{k,i} =0$ and $b_{k,i} = 0$ for all $k \geq  1$), since we known that the first eigenfunction $\phi_0(\cdot)$ is constant in $\mathcal{M}$.
An intermediate situation can be also considered, in which one may use a different number of eigenfunctions in the expansion \eqref{spectral decomp intro} (by setting the corresponding coefficients to zero), depending on the prior knowledge about the flexibility of certain parts of the macromolecule.
See the numerical experiments in subsections \ref{subsubsec_ with prior} and \ref{subsubsec_ no prior} for an illustration of the feature described in this remark.
\end{remark}

\subsection{Approximated spectral decomposition}
\label{subsec: approx spectr decomp}

In single particle cryo-EM, we have access to a dataset with noisy tomographic projections of a macromolecule in different conformations, and taken from different viewing directions.
This is the data that we want to use in order to determine the atomic model of the macromolecule in all its conformations by approximating the spectral decomposition introduced in subsection \ref{subsec: structure spectral decomp}.
As outlined above,  under Assumption \ref{assumption: discrete curve}, the problem consists on
estimating the map $[\Theta (\cdot), \Psi(\cdot)]$ defined in \eqref{conformation to atomic model map}.
In our approach, the functions $\Theta(\cdot)$ and $\Psi(\cdot)$ are estimated by means of an expansion series of the form \eqref{spectral decomp}.
We see that one needs two ingredients: the eigenfunctions of the Laplace-Beltrami operator in $\mathcal{M}$, and the vector coefficients $\mathbf{a}_k\in \R^{m-2}$ and $\mathbf{b}_k\in \R^{m-2}$, where $m$ is the number of atoms in the chain.
The estimation of the latter will be addressed in the following subsection.
For the former, we have the difficulty of not knowing the manifold $\mathcal{M}$,
however, we can apply the same strategy as in \cite{moscovich2020cryo},  based on manifold learning, to obtain a spectral approximation of $\mathcal{M}$ by means of a suitable graph Laplacian, constructed from the cryo-EM dataset.
This method builds on the assumption that we have a low-dimension representation of the conformation of each particle in the dataset.

\begin{assumption}
\label{assumption: low-dimension reconstruction}
For each image in the cryo-EM dataset $\{ Y_i \}_{i=1}^n$ we have a low-dimension representation $\{ \beta_i\}_{i=1}^n \subset \R^q$ of the conformation of the particle in each image, so that the manifold of conformations  $\mathcal{M}$ can be embedded in $\R^q$.
Moreover, this representation is independent of the viewing direction of the tomographic projection, so that $\{ \beta_i\}_{i=1}^n $ can be seen as a discrete approximation of the manifold $\mathcal{M}$.
\end{assumption}

This low-dimension representation of the conformation in each cryo-EM image can be obtained by using existent methods in heterogeneous reconstruction. One possibility is to follow \cite{moscovich2020cryo}, in which the low-dimension representation of the heterogeneity consists on the PCA coordinates of the low-resolution reconstruction of the particles obtained by the method in \cite{anden2018structural}.
See  Appendix \ref{appdx: manifold spectral representation} for a description of the method to obtain the PCA coordinates associated to the conformation of the particle in each image.
Another possibility is to use the latent space variables learnt in a Variational Autoencoder \cite{rosenbaum2021inferring, zhong2021cryodrgn}, in which the representation of each image in the latent variables encodes the information about the corresponding conformation.
For the independence of $\beta_i$ with respect to the viewing direction,  this assumption can be justified by assuming that the reconstruction from \cite{anden2018structural} is accurate independently of the orientation of the particle.
Now, using this low-dimension representation of the dataset, we construct a weighted graph with the similarities between the conformations in every two particles, and compute the eigenvectors associated to the smallest eigenvalues of the symmetric normalised Laplacian matrix associated to the graph of similarities.

Let $K\in \N$ be fixed,  let $L\in \mathcal{S}(n)$ be the symmetric graph Laplacian matrix associated to the aforementioned matrix of similarities, and let 
\begin{equation}
\label{matrix eigenvectors}
\Phi = [\phi^{(0)} | \phi^{(1)} | \cdots | \phi^{(K-1)}]\in \R^{n\times K}
\end{equation}
be the matrix which has, as columns, the eigenvectors $\phi^{(k)}\in \R^n$ associated to the $K$ smallest   eigenvalues of the graph Laplacian matrix $L$.
Some definitions of the graph Laplacian are known to converge to a continuous operator in $\mathcal{M}$, in the sense that the eigenvectors $\phi^{(k)}$ of the graph Laplacian converge to the eigenfunctions $\phi_k (\cdot)$ of this operator \cite{lee2016spectral, rosasco2010learning, von2008consistency}.
In some cases, the limit operator is the Laplace-Beltrami operator in $\mathcal{M}$, but actually, we only need that the eigenfunctions of the limit operator form an orthonormal basis of $L^2(\mathcal{M})$.

As in \cite[Section 5.3]{moscovich2020cryo},  we assume that the ordered eigenvectors  $\phi^{(0)} , \phi^{(1)}, \ldots \in \R^n$ of the graph Laplacian converge in probability to a set of eigenfunctions $\phi_0(\cdot), \phi_1 (\cdot), \ldots : \mathcal{M} \to \R$ of some continuous linear differential operator on $\mathcal{M}$,  in the sense that, for all $k\in \mathbb{N}$, we have
$$
\sup_{i=1, \ldots , n} | \sqrt{n} \phi_i^{(k)} - \phi_k (\beta_i) | \to 0 \qquad
\text{as} \quad n\to \infty. 
$$
Furthermore, we assume that this set of eigenfunctions $\{ \phi_k(\cdot)\}_{k=0}^\infty$ form an orthonormal basis of $L^2(\mathcal{M})$.
Hence, for each particle $u_i$ in the cryo-EM dataset, 
we can estimate the parameters $[\Theta_i, \Psi_i]$ of the corresponding conformation by approximating the expression \eqref{spectral decomp} as
\begin{equation}
\label{spectral decomp approx}
\Theta_i (A) =
\Theta_0 + \sum_{k=0}^{K-1} \mathbf{a}_k \phi_i^{(k)}
= \Theta_0 + A \Phi_i
\qquad \text{and} \qquad
\Psi_i (B) =
\Psi_0 + \sum_{k=0}^{K-1} \mathbf{b}_k \phi_i^{(k)}
= \Psi_0 + B \Phi_i
\end{equation}
where $A = [\mathbf{a}_0, \mathbf{a}_1, \ldots, \mathbf{a}_{K-1}] \in \R^{(m-2)\times K}$ and $B = [\mathbf{b}_0, \mathbf{b}_1, \ldots, \mathbf{b}_{K-1}] \in \R^{(m-2)\times K}$ are the matrices of coefficients, and
$\Phi_i\in \R^K$, for $i=1, \ldots,n$, are the column vectors with the $i$-th component of each eigenvector from \eqref{matrix eigenvectors}, i.e.
$$
\Phi_i = (\phi_i^{(0)}, \phi_i^{(1)} , \ldots , \phi_i^{(K-1)}) \in \R^K.
$$

\begin{figure}
\centering
\begin{subfigure}{.24\textwidth}
\centering
\includegraphics[scale=.4]{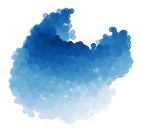}
\caption{$[\phi^{(1)}, \phi^{(2)}, \phi^{(3)}]$}
\end{subfigure}
\hfill
\begin{subfigure}{.24\textwidth}
\centering
\includegraphics[scale=.4]{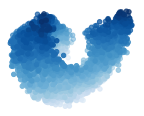}
\caption{$[\phi^{(1)}, \phi^{(2)}, \phi^{(4)}]$}
\end{subfigure}
\hfill
\begin{subfigure}{.24\textwidth}
\centering
\includegraphics[scale=.4]{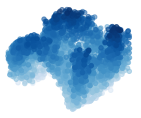}
\caption{$[\phi^{(1)}, \phi^{(3)}, \phi^{(4)}]$}
\end{subfigure}
\hfill
\begin{subfigure}{.24\textwidth}
\centering
\includegraphics[scale=.4]{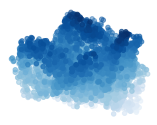}
\caption{$[\phi^{(2)}, \phi^{(3)}, \phi^{(4)}]$}
\end{subfigure}
\caption{Representation of the eigenvectors $\phi^{k}\in \R^n$ of the graph Laplacian constructed from the synthetic dataset of 2D structures presented in subsection \ref{subsec: example 2d}.  Each plot can be seen as an approximation of the manifold of conformations $\mathcal{M}$ projected in different three-dimensional linear spaces.}
\label{fig: eigenvec 2d}
\end{figure}

\subsection{Tomographic reconstruction}
\label{subsec: tomographic reconstruction}

In this subsection, we use the elements presented in the preceding subsections to carry out the reconstruction of the 3D structure corresponding to the conformation in each of the cryo-EM images. This is done by means of a minimization problem, in which the coefficients $A$ and $B$ in \eqref{spectral decomp} are estimated using the data fidelity with the cryo-EM images.

We recall that, under the structural Assumption \ref{assumption: discrete curve}, the backbone forms a discrete curve,  which can be represented as the solution to \eqref{dynamical system},
and therefore,  for each image $\{Y_i\}_{i=1}^n$, we need to estimate the parameters
$$
(\Theta_i , \Psi_i) \in \mathcal{A}:=  [-\pi , \, \pi]^{m-2}\times [-\pi , \, \pi]^{m-2}
$$
representing the torsion and bond angles of the discrete curve at every point; and the parameters
$$
(\hat{z}_i, \hat{F}_i)\in \mathcal{I}:= \R^3\times SO(3),
$$
representing the pose of the particle, i.e. spatial location and orientation.

Recall moreover that, from Assumption \ref{assumption: known conformation},  we are given a reference atomic model, for which we can compute the parameters $(\Theta_0, \Psi_0, \hat{z}_0, \hat{F}_0)\in \mathcal{A}\times \mathcal{I}$.
We know that the structure of the macromolecule in every cryo-EM image is obtained after a deformation of the reference atomic model, which depends on the specific conformation, and a rigid transformation, which accounts for the pose of the particle.

With the above parametrization of the backbone, any rigid transformation of the reference structure can be obtained as
$$
\hat{z}_i = \hat{z}_0 + \Delta_{z,i} \qquad \hat{F}_i = \Delta_{F,i} \hat{F}_0
$$ 
for some $\Delta_{z,i} \in \R^3$ and $\Delta_{F,i} \in SO(3)$;
whereas any non-rigid deformation of the structure can be obtained as
$$
\Theta_i = \Theta_0 + \Delta_{\Theta,i} \qquad 
\Psi_i = \Psi_0 + \Delta_{\Psi,i},
$$
for some $\Delta_{\Theta,i}\in \R^{m-2}$ and 
$\Psi_i = \Psi_0 + \Delta_{\Psi,i} \in \R^{m-2}$.

As mentioned in the introduction, in this work we are interested in recovering the conformational heterogeneity of the macromolecule.
More precisely, we want to estimate the deformation of the given atomic model in each cryo-EM image, i.e. the increments $\Delta_{\Theta,i}$ and $\Delta_{\Psi,i}$ for each particle $i\in \{1, \ldots, n\}$.
Therefore we assume that the position and the orientation of the particles have been already estimated, for instance, by using existent methods such as RELION \cite{scheres2012relion}.

\begin{assumption}
\label{assumption: known directions}
For each 2D image in the cryo-EM dataset $\{ Y_i\}_{i=1}^n$, we know the 3D orientation and the position of the underlying particle, i.e.  we know the parameters $\hat{z}_i$ and $\hat{F}_i$ associated to each particle $u_i$.
\end{assumption}

As presented in subsection \ref{subsec: approx spectr decomp}, 
the parameters $\Theta_i \in [-\pi, \pi]^{m-2}$ and $\Psi \in [-\pi , \pi]^{m-2}$ for each cryo-EM image are estimated as
\begin{equation}
\label{spectral decomp reconstruction}
\Theta_i (A) = \Theta_0 + A \Phi_i \qquad
\text{and} \qquad \Psi_i (B) = \Psi_0 + B \Phi_i,
\end{equation}
where $K\in \mathbb{N}$ is a hyperparameter to be chosen (see Remark \ref{rmk: high ferquencies}),  representing the number of eigenvectors in the spectral decomposition; the vectors $\Phi_i\in \R^K$ for each $i$ are constructed by taking the $i$-th element in each of the $K$ leading eigenvectors of the graph Laplacian matrix, and the matrices $(A,B)\in [\R^{(m-2)\times K}]^2$ are the coefficients to be estimated.

The matrices $A$ and $B$ are estimated by means of an optimization problem in which the discrepancy between the predicted atomic models and the cryo-EM images is minimized.
In order to compare atomic models (atom positions) with 2D images,  we associate, to each atomic model, a 3D density obtained as a sum of Gaussian functions centred at the atom positions, and then we apply the cryo-EM forward operator \eqref{cryo-EM forward operator intro}.
The functional to be minimized consists on the mean square error between the obtained projections and the cryo-EM images.
Let us describe in detail how this optimization problem is formulated.

Using the expression \eqref{spectral decomp reconstruction} for the rotation angles, along with the known 3D spatial location $\hat{z}_i\in \R^3$ and  orientation $\hat{F}_i\in SO(3)$ of each particle, we can estimate the positions of the atoms in the backbone as the solution to the discrete dynamical system \eqref{dynamical system}.
For every particle $i\in \{1,,\ldots , n\}$, we denote the associated point cloud by
\begin{equation}
\label{Gamma_i def}
\Gamma_i (A,B) := \mathbf{Z}
(\Theta_i (A) ,  \Psi_i (B),  \hat{z}_i, \hat{F}_i) = \left[
z_1, z_2, \ldots z_m
\right] \in \R^{3m},
\end{equation}
where
$\mathbf{Z}: \mathcal{A} \times \mathcal{I} \longrightarrow
\R^{3m}$
is the operator which associates, to each set of parameters $(\Theta, \Psi, \hat{z}, \hat{F})\in \mathcal{A}\times \mathcal{I}$, the solution to the dynamical system \eqref{dynamical system}.
See appendix \ref{appdx: atomic model param} for more details.

Then, the 3D electrostatic potential of the underlying particle is estimated as a sum of Gaussian densities centred at the atom positions, i.e.
\begin{equation}
\label{density estimation}
\hat{U}_i (A,B) (x) =
 \sum_{z_j\in \Gamma_i(A,B)} \nu \exp
\left[-\dfrac{\| x - z_j\|^2 }{2\sigma^2} \right], \qquad \forall x\in \R^3,
\end{equation}
where $\nu, \sigma >0$ are hyperparameters which can be typically known from the chemical composition of the macromolecule.

Finally,  in order to compare the reconstructions with the cryo-EM images, we apply the forward cryo-EM operator to the  potential $\hat{U}_i (A,B) (\cdot)$ estimated for each particle,  and evaluate it in the 2D grid formed by the pixel positions of the images.
Let $\mathbf{X} := \{ x_{\alpha,\beta} \}_{\alpha,\beta = (1,1)}^{(N,N)} \in (\R^2)^{N\times N}$ be the rectangular grid corresponding to the pixel positions of the cryo-EM images.
Then we define $\hat{Y}_i (A,B) = \{\hat{y}_{\alpha, \beta}^{(i)}\}_{\alpha,\beta = (1,1)}^{(N,N)}$ as
\begin{equation}
\label{cryo-EM images estimate}
\hat{y}_{\alpha, \beta}^{(i)} (A,B) = 
h_i\ast T[\hat{U}_i (A,B)] (x_{\alpha, \beta}) \qquad
\text{for all} \ (\alpha, \beta) \in  \{ 1, 2, \ldots ,N  \}^{N\times N},
\end{equation}
where $T$ is the parallel beam ray transform taken along the vertical axis, followed by a 2D convolution with the point spread function $h_i$, that is assumed to be known for each image.   

The parameters $A$ and $B$ in \eqref{spectral decomp approx} are estimated by means of a minimisation problem, in which the loss functional is the mean square error between the clean cryo-EM images $\hat{Y}_i$ given by the estimated atomic model and the noisy cryo-EM images $Y_i$ from the dataset, i.e.
\begin{equation}
\label{variational problem}
[\hat{A}, \hat{B}] = \underset{A,B}{\operatorname{argmin}} \dfrac{1}{n} \sum_{i=1}^n  \| \hat{Y}_i(A,B) - Y_i\|_2^2.
\end{equation}

A solution to the minimisation problem \eqref{variational problem} can be approximated by means of stochastic gradient descent SGD algorithm in the following way.
Instead of computing the whole gradient at each iteration,  at the beginning of each epoch, the cryo-EM dataset is randomly partitioned in disjoint mini-batches
$$
\mathfrak{B}_1 \cup \mathfrak{B}_2 \cup \ldots \cup \mathfrak{B}_L = \{ 1, 2, \ldots , n\}
$$
with $L\in \mathbb{N}$ smaller than $n$.
Then, for each of the mini-batches  $\mathfrak{B}_l$ we update the parameters $(A,B)$ as
$$
(A,B) \longmapsto (A,B) - \gamma \dfrac{1}{n_l}  \nabla_{A,B} \left[ \sum_{i\in \mathfrak{B}_l}  \| \hat{Y}_i(A,B) - Y_i\|_2^2\right],
$$
where $\gamma >0$ is the learning rate and $n_l = | \mathfrak{B}_l|$ is the batch size.

\begin{remark}[On the non-convexity of problem \eqref{variational problem}]
We claim that the functional to be minimized in \eqref{variational problem} is non-convex with respect to $A$ and $B$. 
However, as we explain below in this remark, the specific form of the functions $\Theta(A)$ and $\Psi(B)$ in \eqref{spectral decomp reconstruction}, along with a suitable initialization (see Remark \ref{rmk: SGD initialization}),  can provide that the SGD algorithm converges to a global minimum.

In view of \eqref{spectral decomp reconstruction},  we are actually optimizing over the rotation angles $\Theta$ and $\Psi$ of the atomic structures.
The atom positions are given by the solution to the dynamical system \eqref{dynamical system}, where
$$
\Theta = (\theta_1, \theta_2 , \ldots , \theta_{m-2}) \in [-\pi , \pi]^{m-2}
\qquad \text{and} \qquad
\Psi = (\psi_1, \psi_2, \ldots , \psi_{m-2})\in [-\pi, \pi]^{m-2}.
$$
These parameters enter in \eqref{dynamical system} through the map
$$
(\theta_j , \psi_j) \longmapsto 
R (\theta_j, \psi_j) = 
\left(\begin{matrix}
\cos \psi_j \cos \theta_j & \cos \psi_j \sin \theta_j & - \sin \psi_j \\
-\sin \theta_j & \cos \theta_j & 0 \\
\sin \psi_j \cos \theta_j & \sin \psi_j \sin \theta_j & \cos \psi_j
\end{matrix}\right)\in SO(3),
$$
and therefore, we deduce that we are actually optimizing over the Lie group $SO(3)^{m-2}$.
This is a compact manifold with no boundary, which makes our minimization problem \eqref{variational problem} non-convex.
This may lead to problems when applying iterative algorithms such as SGD, as one can get stuck in a local minimum,  far away from any global minimizer.
However, we recall that we are looking for deformations of a given atomic model, i.e. perturbations of the known parameters $(\Theta_0, \Psi_0)$. 
This means that we are not optimizing over the whole $SO(3)^{m-2}$, but rather over a neighbourhood of the rotation matrices associated to the parameters $(\Theta_0, \Psi_0)$.
In view of the specific form of the functions $\Theta(A)$ and $\Psi(B)$, we look for matrices $A$ and $B$ close to zero.
\end{remark}

\begin{remark}[On the initialization]
\label{rmk: SGD initialization}
In this high-dimensional  non-convex situations,the success of iterative algorithms such as SGD heavily depends on having a good initialisation of the parameters. Here is where the specific form of the functions $\Theta_i(A)$ and $\Psi_i(B)$ in \eqref{spectral decomp reconstruction} becomes important, as we are looking for perturbations of $\Theta_0$ and $\Psi_0$.
Indeed,  by taking the initialization
$$
A_0 = \mathbf{0}_{m-2,K} \qquad \text{and} \qquad
B_0 = \mathbf{0}_{m-2,K}
$$ where $\mathbf{0}_{m-2,K}$ is the matrix of zeros of dimension $(m-2)\times K$, we obtain that 
$$
\Theta_i(A)=\Theta_0 \qquad \text{and} \qquad
\Psi_i(B) = \Psi_0 \qquad \forall i\in \{ 1,2,\ldots , n\}.
$$
In other words,  the predicted structure for all the particles is the given known conformation. 
Then, after each SGD iteration,  the coefficients $A$ and $B$ are updated, deforming the given structure according to the cryo-EM dataset.
\end{remark}

\section{Numerical experiments}
\label{sec: numerical experiments}

In this section, we present some numerical experiments in which we apply our method to carry out the tomographic reconstruction of flexible atomic structures in two synthetic datasets.  
In the first dataset, the underlying structure is a discrete curve in $\R^2$, which forms a flexible box and two moving arms (see figure \ref{fig: 2D given conformation}).  The volumetric representations of the structure are therefore functions in $\R^2$, and the associated tomographic projections are 1D functions (see figure \ref{fig: 2d struct data}). 
Although in the above, we have presented our method in a three-dimensional setting, it is not difficult to adapt all the steps in our method to a 2D structure with 1D tomographic projections.
In the second dataset, the underlying structure is a protein backbone, composed by the C-$\alpha$ atoms, which is deformed following a simulated molecular trajectory  from  \cite{Beckstein2018} using molecular dynamics (see Figure \ref{fig: backbone trajectory}).
In this case the structure is three-dimensional, and we can therefore apply our method exactly as it is presented in the above.

\subsection{Two-dimensional reconstruction}
\label{subsec: example 2d}

In this experiment, we consider a 2D structure consisting of a discrete curve in $\R^2$ with $m=149$ points and inter-atomic distance $\delta = 3.5$.
The atomic structure can be represented using the same construction as in Appendix \ref{appdx: atomic model param}, adapted to a two-dimensional setting:
\begin{equation}
\label{2D dynamical system}
\begin{cases}
z_{j+1} = z_j + \delta e_2 F_j & j \in \{ 1, \ldots , m-1\} \\
F_{j+1} = R(\theta_j) F_j & j\in \{ 1, \ldots , m-2\} \\
\text{with $z_{j_0} = \hat{z}\in \R^2$ and $F_{j_0} = \hat{F}\in SO(2)$,}
\end{cases}
\end{equation}
where $e_2 = (0,1)\in \R^2$, $\{z_j\}_{j=1}^m\in \R^{2m}$ denote the positions of the atoms and $\{ F_j\}_{j=1}^{m-1}\in SO(2)^{m-1}$ denote the discrete Frenet frames at each point in the curve.
Note that, since we have a 2D structure, the rotation matrices $R(\theta_j)\in SO(2)$ are parametrised by a single angle and are given by
$$
R (\theta_j) := \begin{bmatrix}
\cos \theta_j & -\sin \theta_j \\
\sin \theta_j & \cos \theta_j
\end{bmatrix}.
$$

\begin{figure}
\centering
\includegraphics[scale=.25]{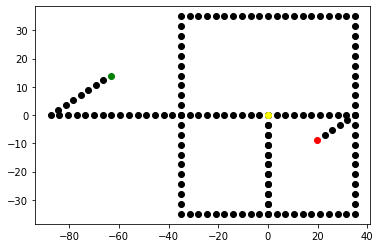}
\includegraphics[scale=.25]{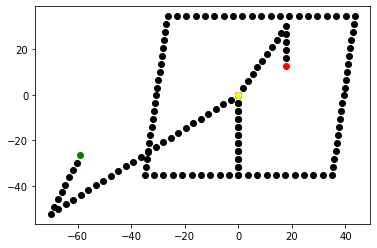}
\caption{2D structure in the given conformation (at the left) and a different conformation (at the right).  The first point $z_1$ is coloured in green,  the end point $z_m$ in red, and the reference point $z_{j_0}$ in yellow. Note that the curve self-intersects along the segment joining the box and the arms.}
\label{fig: 2D given conformation}
\end{figure}

In the synthetic dataset that we have generated, the 2D structure is a discrete curve in $\R^2$ forming a flexible box with two moving arms (see Figure \ref{fig: 2D given conformation}).
The arms and the box are joined by a segment, along which the atomic structure self-intersects.
We see that all the rotation angles  $\theta_j$ are equal to 0 except for the points in the corners of the structure.
In this experiment, the angles determining the position of the arms are given by
\begin{equation}
\label{arms angles}
\theta_{33} = -\frac{\pi}{2} + \tilde{\theta}_{33} \qquad
\text{and} \qquad 
\theta_{133} = -\frac{\pi}{2} + \tilde{\theta}_{133},
\end{equation}
where $ \tilde{\theta}_{33}$ and $ \tilde{\theta}_{133}$ are random variables uniformly distributed in $[-\pi/2, \pi/2]$,
and the angles determining the form of the box are given by
\begin{equation}
\label{box angles}
\theta_{53} = \theta_{93} =-\frac{\pi}{2} + \tilde{\theta}_{box} \quad
\text{and} \quad 
 \theta_{73} = \theta_{123} = -\frac{\pi}{2} - \tilde{\theta}_{box},
\end{equation}
where $\tilde{\theta}_{box}$ is a random variable uniformly distributed in $[-\pi/4, \pi/4]$.
As reference point we have chosen $j_0 = 33$,  which is in a part of the structure invariant in all the conformations (see Figure \ref{fig: 2D given conformation}).

In this two-dimensional case, the electrostatic potential associated to each particle is not a volumetric density, but rather a 2D density.
For the given point cloud $\{ z_j \}_{j=1}^m \in \R^{2m}$ we define the associated 2D density as
\begin{equation}
\label{image density 2d struct}
u (x) = \sum_{j=1}^m \exp \left[ -\dfrac{\| x-z_j\|^2}{2\sigma^2} \right], \qquad \forall x\in \R^2,
\end{equation}
with $\sigma = 7$.
In our numerical experiments we have considered no point spread function. The forward operator is modelled as the tomographic projection of the function $u(x)$ along the vertical axis, resulting in a 1D function.

In order to generate the synthetic cryo-EM dataset, we have generated $4000$ structures as defined above, where the point cloud is given by \eqref{2D dynamical system}, with the rotation angles given by \eqref{arms angles}--\eqref{box angles}.
Then, each structure is rotated by choosing $\hat{F}$ uniformly at random in $SO(2)$.
With the 4000 structures, we have generated the associated 2D densities $\{ u_i\}_{i=1}^{4000}$ as in \eqref{image density 2d struct}.
Then, we have computed the 1D tomographic projections by integrating the 2D densities along the vertical axis, i.e.
$$
Y_i = T(u_i) + \varsigma_i, \qquad \text{for} \ i = 1, \ldots , 4000,
$$
where $T(\cdot)$ denotes the tomographic projection of $u_i$ along the vertical axis, evaluated on a one-dimensional grid with 128 discretisation points.
We have also added Gaussian noise  $\varsigma_i\in \R^{128}$, where each component follows a Gaussian distribution with variance $2500$.
The variance of the clean projections is $182.63$, so the signal to noise ratio of the dataset is
$$
SNR = \dfrac{182.63}{2500} = 0.073.
$$

\begin{figure}
\centering
\begin{subfigure}[c]{.25\textwidth}
\centering
\includegraphics[scale=.25]{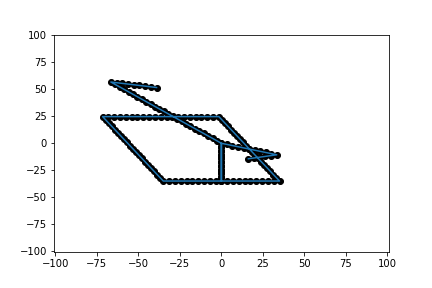}
\includegraphics[scale=.25]{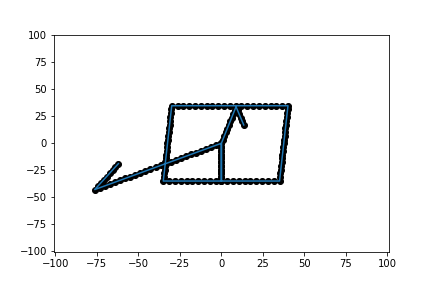}
\end{subfigure}
\hfill
\begin{subfigure}[c]{.2\textwidth}
\includegraphics[scale=.28]{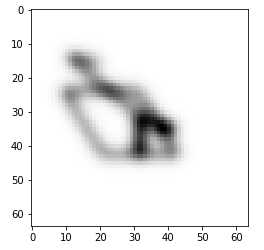}
\includegraphics[scale=.28]{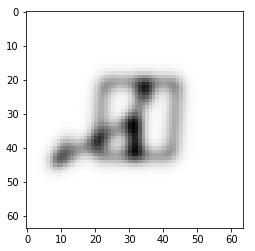}
\end{subfigure}
\hfill
\begin{subfigure}[c]{.25\textwidth}
\includegraphics[scale=.25]{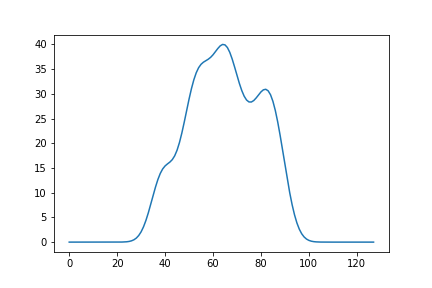}
\includegraphics[scale=.25]{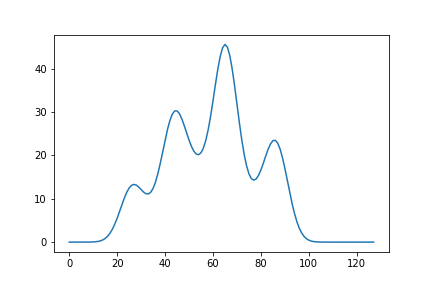}
\end{subfigure}
\hfill
\begin{subfigure}[c]{.25\textwidth}
\includegraphics[scale=.25]{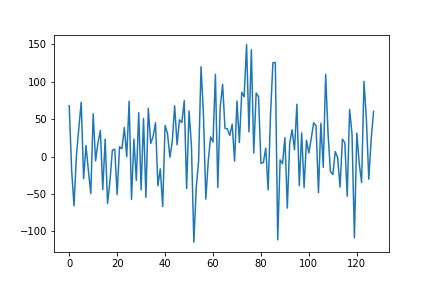}
\includegraphics[scale=.25]{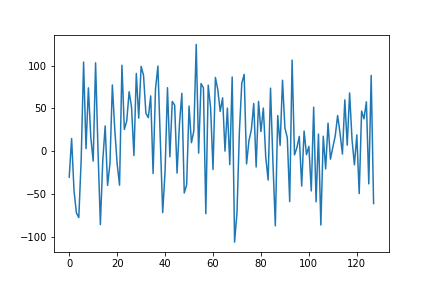}
\end{subfigure}
\caption{From left to right: the 2D structure, the 2D density function, the clean tomographic projection and the noisy tomographic projection.}
\label{fig: 2d struct data}
\end{figure}

As mentioned in Assumption \ref{assumption: low-dimension reconstruction}, we assume that we have a low-dimension reconstruction of the particles in the cryo-EM dataset (this one can be obtained by the method in \cite{anden2018structural}).
In our case, this low-dimension representation of the particles comes in the form of noisy 2D densities of size  $64\times 64$, that we call $\{ U_i\}_{i=1}^{4000} \subset \R^{64\times 64}$. Each density $U_i$ is defined as the function \eqref{image density 2d struct} associated to the $i$-th point cloud, evaluated in a 2D grid. We then added Gaussian noise with variance 9 to each low-resolution density.
With these densities we have constructed the weighted graph of similarities, using a Gaussian kernel of the form
\begin{equation}
\label{graph weights numerical exper}
W_{ij} = \exp \left[- \dfrac{\| U_i - U_j \|^2}{2\sigma^2}  \right], \quad \text{with} \ \sigma = 96.
\end{equation}
Then we have constructed the normalised graph Laplacian and computed the eigenvectors associated to the 20 smallest eigenvalues.
In Figure \ref{fig: eigenvec 2d} we see the representation of some of these eigenvectors. These plots can be seen as an approximation of the manifold of conformations $\mathcal{M}$. Note that this structure has three degrees of freedom (the movement of the two arms plus the deformation of the box), and therefore, the manifold of conformations is expected to have dimension $3$.

The next step in our method is to construct the spectral decomposition of the rotation angles $\theta_j$ with $j=1, \ldots, m-2$.
Note that in the  2D case we only need to estimate one rotation angle $\theta_j$ at each point in the curve, as opposed to two angles in the case of 3D structures.
Let us denote the sequence of rotation angles for the $i$-th particle as 
$$
\Theta_i = [\theta_1, \theta_2, \ldots , \theta_{m-2}]\in [-\pi, \pi]^{m-2}.
$$
As aforementioned, we assume to have access to a conformation of the structure (see Assumption \ref{assumption: known conformation}).
In this case, the given conformation (see figure \ref{fig: 2D given conformation}) is the structure with the arms horizontally aligned and the box forming a square, i.e.  we take
$\tilde{\theta}_{33} = \tilde{\theta}_{133} = \tilde{\theta}_{box} = 0$ in \eqref{arms angles} and \eqref{box angles}.
Following subsection \ref{subsec: approx spectr decomp}, the parameter $\Theta_i$ for each particle is estimated as
$$
\Theta_i = \Theta_0 + A \Phi_i,
$$
where $\Phi_i\in \R^{20}$ is the vector with the $i$-th component of each of the first eigenvectors of the graph Laplacian computed before.

The last step in our method is to estimate the matrix $A \in \R^{(m-2)\times 20}$, which we address by following subsection \ref{subsec: tomographic reconstruction}.
We will consider two cases for the tomographic reconstruction.
In the first one, we use the knowledge of the angles that are actually varying, i.e. \eqref{arms angles} and \eqref{box angles}.  Note that, by the construction of this particular structure, most of the rotation angles are equal to 0 in all the conformations.
In the second case, we do not assume such knowledge about the angles which are actually varying, and therefore, the entire vector $\Theta_i$ is estimated.
From a practical viewpoint a similar assumption may hold if one knows which  parts in the macromolecule are flexible and which are not. For instance, it is known that the secondary structures in proteins, such as $\alpha$-helices and $\beta$-strands are preserved from one conformation to another (see Remark \ref{rmk: high ferquencies}).

The coefficients in the matrix $A$ are estimated by applying SGD to the minimisation problem \eqref{variational problem}. In the minimisation problem, we use only 90 percent of the data ($n=3600$ noisy tomographic projections).
The rest of the data ($n=400$ clean tomographic projections)  are used to test the accuracy of the reconstructed structures. 
In both cases, we have applied 100 epochs of SGD algorithm with a batch size of 500.
As for the learning rate, in the case in which only the varying parameters are estimated (subsection \ref{subsubsec_ with prior}) we used a learning rate of 0.1, whereas in the case in which all the parameters are estimated (subsection \ref{subsubsec_ no prior}) it seems to be necessary to take a smaller learning rate, and we have chosen 0.01 for this case.
In both cases, we used the PyTorch implementation of the SGD algorithm, and it took around 2 minutes to compute 100 epochs of SGD on a MacBook Pro 1.4 GHz Intel Core i5 with 16 GB of RAM memory.

\subsubsection{Reconstruction estimating only the varying angles}
\label{subsubsec_ with prior}

In this case, we assume that we know the angles that are actually varying in the structure, i.e. in view of \eqref{arms angles} and \eqref{box angles}, we only need to estimate the components $\theta_{33}, \theta_{53}, \theta_{73}, \theta_{93}, \theta_{123}$ and $\theta_{133}$ in each vector $\Theta_i$.
All the other parameters in $\Theta_i$ are given by the known conformation $\Theta_0$.
In Figure \ref{fig: training 2d loss}, at the left, we observe the evolution of the loss function evaluated after every epoch of the SGD algorithm. In the centre we see the evolution of the loss function evaluated on the test data. Note that the test data consists of clean tomographic projections, and therefore, the loss is much smaller. At the right we see the evolution of the  distance between the predicted point clouds for the test data $\{\tilde{z}_i\}_{i=1}^{400}$ and the ground truths  $\{z_i\}_{i=1}^{400}$. The orange line in figure \ref{fig: training 2d loss} represents the maximum over the  point cloud
\begin{equation}
\label{max point cloud error}
\dfrac{1}{400} \sum_{i=1}^{400} \max_{j = 1, \ldots , m} \| z_{i,j} - \tilde{z}_{i,j} \|,
\end{equation}
whereas the blue line represents the total average error in the predicted point cloud with respect to the ground truth, i.e.
\begin{equation}
\label{avg point cloud error}
\dfrac{1}{400m} \sum_{i=1}^{400} \sum_{j = 1}^m \| z_{i,j} - \tilde{z}_{i,j} \|.
\end{equation}
In Figure \ref{fig: 2d reconstruction} we see some of the reconstructions of the 2D structures.
Of course,  using the knowledge of which parameters are varying and which are invariant over the conformations yields significantly better reconstructions as compared to the reconstructions obtained in the following subsection \ref{subsubsec_ no prior}, where such a knowledge is not assumed.

\begin{figure}[t]
\centering
\includegraphics[scale=.3]{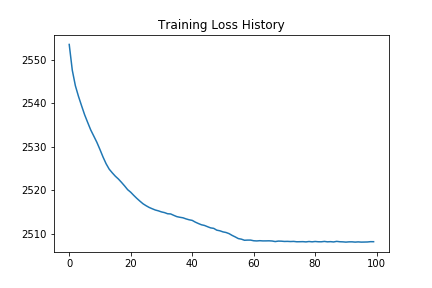}
\includegraphics[scale=.3]{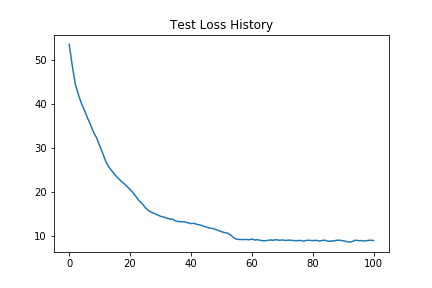}
\includegraphics[scale=.3]{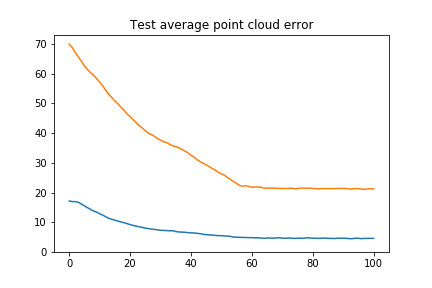}
\caption{Evolution of the loss functional and the error in the predictions with respect to the ground truth for the 2D reconstructions, using the prior knowledge about the varying parameters (subsection \ref{subsubsec_ with prior}).}
\label{fig: training 2d loss}
\end{figure}

\begin{figure}[t]
\centering
\begin{subfigure}[c]{.48\textwidth}
\centering
\includegraphics[scale=.5]{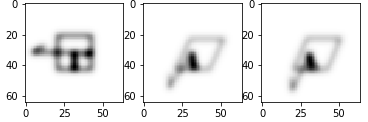}
\includegraphics[scale=.5]{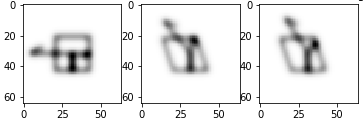}
\includegraphics[scale=.5]{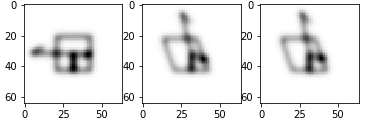}
\caption{Reconstruction of the 2D densities using our method. At the left we see the structure in the given conformation, in the middle the ground true and at the right the reconstruction.}
\label{fig: 2d reconstruction a}
\end{subfigure}
\hfill
\begin{subfigure}[c]{.48\textwidth}
\centering
\includegraphics[scale=.25]{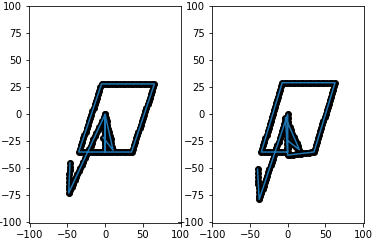}\\
\includegraphics[scale=.25]{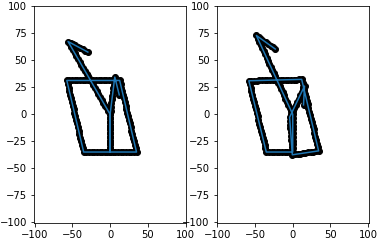}\\
\includegraphics[scale=.25]{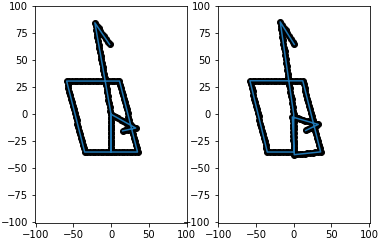}
\caption{At the left the 2D structure (ground truth) and at the right the structure predicted by means of our method.}
\label{fig: 2d reconstruction b}
\end{subfigure}
\caption{Some examples of the tomographic reconstruction of 2D structures from tomographic projections. In this case we used the knowledge of the parameters that are varying.}
\label{fig: 2d reconstruction}
\end{figure}

\subsubsection{Reconstruction estimating all the angles}
\label{subsubsec_ no prior}

As opposite to the previous case, we now do not assume the knowledge of the varying parameters in the structure, and therefore we estimate the spectral decomposition of all the rotation angles  in $\Theta$.
In Figure \ref{fig: training 2d loss no prior} we see the evolution of the loss functional over the epochs of the SGD algorithm as well as the error in the predicted point clouds in the test dataset (we recall that the orange line is given by \eqref{max point cloud error} and the blue line is given by \eqref{avg point cloud error} in the previous subsection).
We observe in Figure \ref{fig: 2d reconstruction no prior} that the predicted reconstructions are less accurate as compared with the reconstructions obtained when we know the parameters that are varying.

\begin{figure}[t]
\centering
\includegraphics[scale=.3]{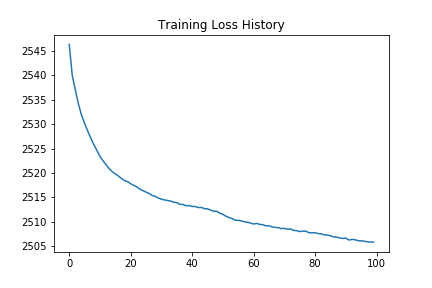}
\includegraphics[scale=.3]{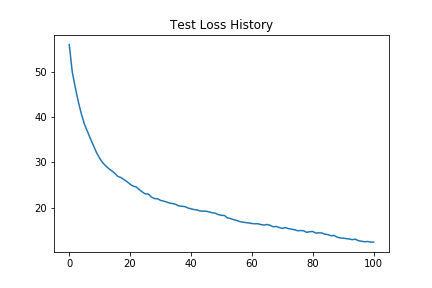}
\includegraphics[scale=.3]{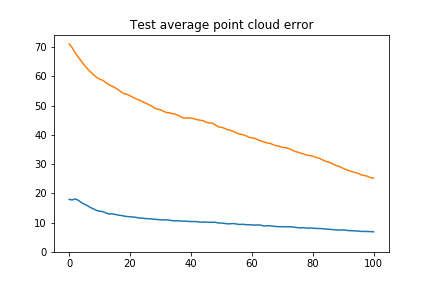}
\caption{Evolution of the loss functional and the error in the predictions with respect to the ground truth for the 2D reconstructions, without using the prior knowledge about the varying parameters (subsection \ref{subsubsec_ no prior}).}
\label{fig: training 2d loss no prior}
\end{figure}

\begin{figure}[t]
\centering
\begin{subfigure}[c]{.48\textwidth}
\centering
\includegraphics[scale=.5]{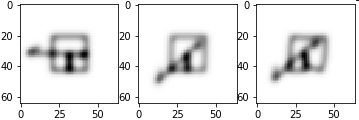}
\includegraphics[scale=.5]{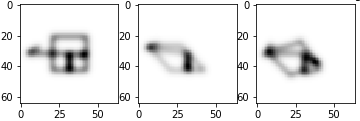}
\includegraphics[scale=.5]{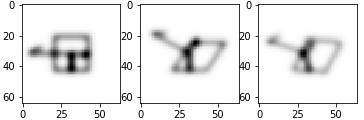}
\caption{Reconstruction of the 2D densities using our method. At the left we see the structure in the given conformation, in the middle the ground true and at the right the reconstruction.}
\end{subfigure}
\hfill
\begin{subfigure}[c]{.48\textwidth}
\centering
\includegraphics[scale=.25]{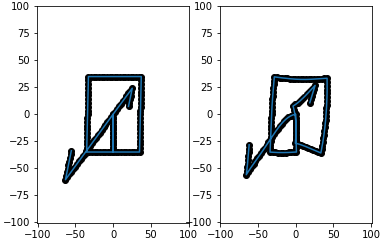}\\
\includegraphics[scale=.25]{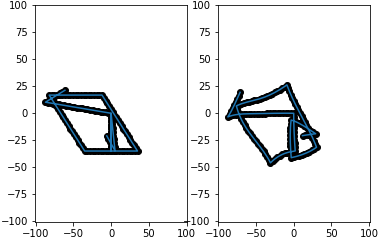}\\
\includegraphics[scale=.25]{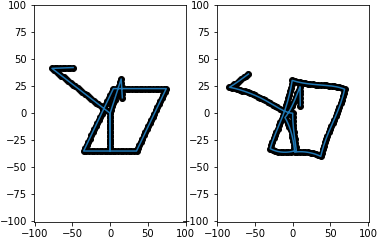}
\caption{At the left the 2D structure (ground truth) and at the right the structure predicted by means of our method.}
\end{subfigure}
\caption{Some examples of the tomographic reconstruction of 2D structures from tomographic projections. In this case we did not use the knowledge of the parameters that are varying, and therefore, all the parameters are estimated.}
\label{fig: 2d reconstruction no prior}
\end{figure}

\subsection{Three-dimensional reconstruction}
\label{subsec: example3d}

In this subsection, we consider two 3D atomic structures satisfying the chain property from Assumption \ref{assumption: discrete curve}. 
The structures correspond to proteins from which we have extracted the positions of the C-$\alpha$ atoms.   
In both cases the ground truth structures for the different conformations are simulated trajectories using molecular dynamics, taken from \cite{Beckstein2018} and \cite{shaw2020molecular} respectively.
The trajectories consist of a sequence of atomic models corresponding to the different frames of the molecular trajectory.

In order to generate a synthetic cryo-EM dataset from the MD simulation, we have chosen 4000 structures uniformly at random from the frames of the MD trajectory. Then, each particle is rotated by randomly selecting a   rotation matrix $\hat{F}_i \in SO(3)$ from a uniform distribution.
The cryo-EM images $\{ Y_i\}_{i=1}^{4000}$ are obtained as 
$$
Y_i = h_i\ast T(u_i) + \varsigma_i,  \qquad \text{for}  \  i =  1, \ldots , 4000
$$
where $T(\cdot)$ is the parallel beam ray transform of the 3D density volume $u_i(\cdot)$ along the vertical axis, evaluated in a $64\times 64$ two-dimensional grid.

Since the 2D convolution is applied to the images in Fourier variables, we simply multiply the 2D projection by the CTF $\hat{h}_i$, as defined in \eqref{CTF intro}. 
In all the numerical experiments we have used the same parameters for the CTF.  The voltage of the transmission electron microscope is set to $200$ keV, the spherical aberration $C_s$ is of $2$ mm,  the amplitude contrast is set to $\alpha = 0.1$, and the aperture function $A(\xi)$ is equal to $1$. As for the defocus,  it is typical in practice to have images at different defocus in order to avoid the zero-crossings of the CTF to coincide in all the images. Therefore, in each cryo-EM image, we have selected a random defocus $\Delta z$ among the values $(1.5, 1.75, 2, 2.25, 2.5)$ in microns.

The 3D densities $\{ u_i\}_{i=1}^{4000}$ corresponding to each particle are generated as the sum of Gaussian functions centred at the C-$\alpha$ positions as in \eqref{density estimation}, i.e.
\begin{equation}
\label{image density 3d struct}
u_i (x) = \sum_{j=1}^{n} \exp \left[ -\dfrac{\| x-z_{i,j}\|^2}{2\sigma^2} \right], \qquad \forall x\in \R^3.
\end{equation}
where $\{ z_{i,j} \}_{j=1}^{n}\in \R^{3\times n}$ are the positions of the C-$\alpha$ atoms in the $i-$th particle.
In both cases, we have taken $\sigma = 3$.
Finally, the images are corrupted with Gaussian noise,  resulting in SNR of the order $0.01$.
We have carried out the experiments at different levels of SNR, in order to compare the sensitivity of the 3D reconstructions to the noise.

Just as in the numerical experiments in subsection \ref{subsec: example 2d}, we assume that we have  a low-dimension representation of the 3D volumes associated to the particles (Assumption \ref{assumption: low-dimension reconstruction}).
In this case, these low-dimension representations are the density volumes \eqref{image density 3d struct} associated to each particle, evaluated in a 3D voxel grid of size $16\times 16 \times 16$.
In the experiments presented in subsection \ref{subsubsec: example 3d 2}, we have also run the experiment using higher resolution representations to construct the graph Laplacian, but the results seem to be similar.
In order to simulate possible inaccuracies in the low-dimension representation, we have added different levels of noise to the low-resolution volumes (see Tables \ref{talbe: expermients 1} and \ref{talbe: expermients 2}).
With these low-dimension representation of the 3D volumes $\{ U_i\}_{i=1}^{4000} \subset \R^{16\times 16\times 16}$, we construct a weighted graph with weights given by \eqref{graph weights numerical exper},  with $\sigma$ proportional to the noise.
Then, we have constructed the normalised graph Laplacian matrix associated to the above graph, and computed the eigenvectors associated to the 10 smallest eigenvalues.
In the molecular trajectory, we also observe high-frequency vibrations of the atoms, however, by keeping the number of eigenvectors rather small in the spectral decomposition, we are not recovering these vibrations (see Remark \ref{rmk: high ferquencies}).

Following subsection \ref{subsec: approx spectr decomp},  we construct the rotation angles for each particle as a linear combination of the 10 first eigenvectors, i.e.
$$
\Theta_i = \Theta_0 +  A \Phi_i
\qquad \text{and} \qquad
\Psi_i = \Psi_0 +  B \Phi_i,
$$
where $\Theta_0$ and $\Psi_0$ are the rotation angles associated to the known conformation (see Assumption \ref{assumption: known conformation}). In both experiments, we have chosen the known conformation $(\Theta_0,\Psi_0)$ to be the that of the particle in the first frame of the molecular trajectory (the atomic structure at the left in Figures \ref{fig: backbone trajectory} and \ref{fig: backbone trajectory 2}).

In the tomographic reconstruction we have not assumed any prior knowledge about the flexibility of the different parts of the atomic structure nor the secondary structures, and therefore, all the parameters in $\Theta$ and $\Psi$ are  estimated in the same way.  
The coefficients in the spectral decomposition of the rotation angles $\Theta$ and $\Psi$ are estimated by means of SGD applied to the minimisation problem \eqref{variational problem} in subsection \ref{subsec: tomographic reconstruction}.
As before, we have used only 90 percent of the cryo-EM data (3600 noisy tomographic projections) in the minimisation problem, leaving the remaining 10 percent (400 clean tomographic projections) to evaluate the accuracy of the reconstruction.

\subsubsection{Three-dimensional structure: Adenylate Kinase}
\label{subsubsec: example 3d 1}

The structure corresponds to the backbone of a protein with 214 amino-acid residues, from which we have extracted the positions of the C-$\alpha$ atoms. 
The motion of the protein structure is a simulation taken from \cite{Beckstein2018} using molecular dynamics,  and the result is a video with 102 frames in which the backbone is continuously deformed (see figure \ref{fig: backbone trajectory}).
See figure \ref{fig: 3d struct data CTF} for three examples of the synthetic cryo-EM images that we have generated for this experiment.

In the whole trajectory, the distance between adjacent C-$\alpha$ atoms in the backbone is approximately constant,  with average $3.8413$.
In the reconstruction of the atomic structure, we have used a constant inter-atomic distance $\delta = 3.8412$.
This choice produces small errors in the reconstructions as the inter-atomic distances in the dataset (ground truth) slightly vary due to small vibrations of the atoms.
However, these errors do not seem to be very relevant in our reconstruction.
As reference point in our reconstructions we have chosen the middle point in the discrete curve, i.e.  the point $j_0 = 112$.

\begin{figure}[t]
\centering
\includegraphics[scale=.4]{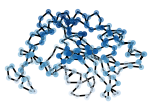}
\hspace{20pt}
\includegraphics[scale=.4]{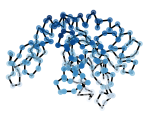}
\hspace{20pt}
\includegraphics[scale=.4]{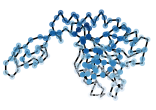}
\hspace{20pt}
\includegraphics[scale=.4]{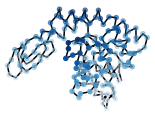}
\caption{The backbone of the protein in four frames of the MD trajectory used in the numerical experiments in subsection \ref{subsubsec: example 3d 1}.}
\label{fig: backbone trajectory}
\end{figure}

\begin{figure}[t]
\centering
\includegraphics[scale=.5]{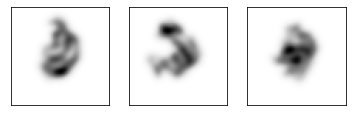}

\centering
\includegraphics[scale=.5]{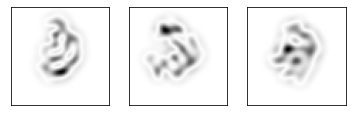}

\centering
\includegraphics[scale=.5]{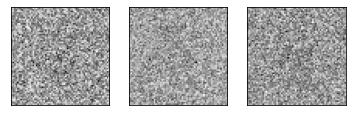}
\caption{Three cryo-EM images from the synthetic dataset used in the experiment corresponding to the 4-th row in Table \ref{talbe: expermients 1}. In the top row we see the clean tomographic projections of three particles, in the middle row we see these tomographic projections after the convolution with the point spread function, and in the bottom row we see the images after adding Gaussian noise. The SNR is $0.0136$.}
\label{fig: 3d struct data CTF}
\end{figure}

\begin{figure}
\centering
\begin{subfigure}{.24\textwidth}
\centering
\includegraphics[scale=.3]{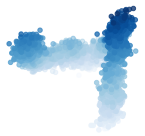}
\caption{$[\phi^{(1)}, \phi^{(2)}, \phi^{(3)}]$}
\end{subfigure}
\hfill
\begin{subfigure}{.24\textwidth}
\centering
\includegraphics[scale=.3]{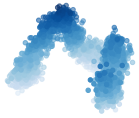}
\caption{$[\phi^{(1)}, \phi^{(2)}, \phi^{(4)}]$}
\end{subfigure}
\hfill
\begin{subfigure}{.24\textwidth}
\centering
\includegraphics[scale=.3]{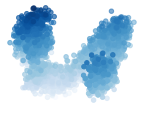}
\caption{$[\phi^{(1)}, \phi^{(3)}, \phi^{(4)}]$}
\end{subfigure}
\hfill
\begin{subfigure}{.24\textwidth}
\centering
\includegraphics[scale=.3]{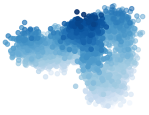}
\caption{$[\phi^{(2)}, \phi^{(3)}, \phi^{(4)}]$}
\end{subfigure}
\caption{Representation of the eigenvectors $\phi^{k}\in \R^n$ of the graph Laplacian constructed from the synthetic dataset used in subsection \ref{subsubsec: example 3d 1}.  Each plot can be seen as an approximation of the manifold of conformations $\mathcal{M}$ projected in different three-dimensional linear spaces.}
\label{fig: eigenvec 3d}
\end{figure}

In Table \ref{talbe: expermients 1} we see the accuracy of the numerical experiments carried out for this 3D structure at different noise levels. We observe that the results are very similar.
For each experiment, we have computed 30 iterations of SGD with a batch size of 100 and a learning rate of $0.5$.
In figure \ref{fig: training 3d loss} we see the evolution of the loss functional over the iterations of the SGD algorithm. In the plot at the right, we see the evolution of the error of our reconstructed point clouds with respect to the ground truth (see formulae \eqref{max point cloud error} and \eqref{avg point cloud error}). These plots correspond to the fourth row in Table \ref{talbe: expermients 1}.
We have used the PyTorch implementation of the SGD algorithm. This time,  it took around 30 minutes to compute 30 epochs on an Intel Core(TM) i7-7700 CPU at 3.60GHz with 32 GB of RAM memory.
See figure \ref{fig: 3d reconstruction} for some example of the tomographic reconstructions of the 3D structures presented in this subsection.

\begin{table}
\begin{tabular}{c|c||c|c|c|c}
SNR & Low-dim.  noise & Training loss & Test loss & Avg. error & Max. error \\
\hline
  0.0559 &0.5 & 9.1166 & 0.1121 & 2.4346 & 6.4855 \\
  0.0545 & 1 & 9.1072 & 0.1135 & 2.4887 & 6.7112   \\
  0.0135 & 0.5 & 36.1290 & 0.1110 & 2.4812 & 6.5343 \\
  0.0136 & 1 & 36.1282 & 0.1118 & 2.4718 & 6.6462
\end{tabular}
\caption{Results of the numerical experiments in subsection \ref{subsubsec: example 3d 1} at different noise levels.
SNR denotes the ratio between the variance of the clean tomographic projection and the cryo-EM images used in the reconstruction.  Low-dim. noise denotes the variance of the Gaussian noise added to the low-dimension representation of the particles used to construct the graph Laplacian. Avg. error denotes the average error between the predicted atom position in the atomic model and the ground truth (see formula \eqref{avg point cloud error}), and Max. error is the maximum distance between the predicted atom positions and the ground truth (see formula \eqref{max point cloud error}).}
\label{talbe: expermients 1}
\end{table}

\begin{figure}[t]
\centering
\includegraphics[scale=.3]{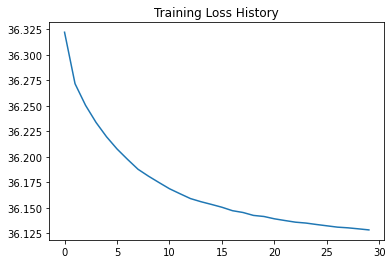}
\includegraphics[scale=.3]{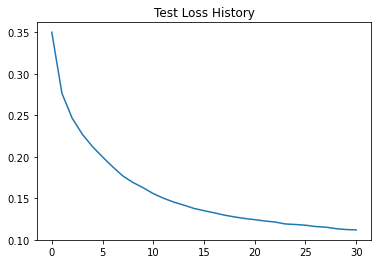}
\includegraphics[scale=.3]{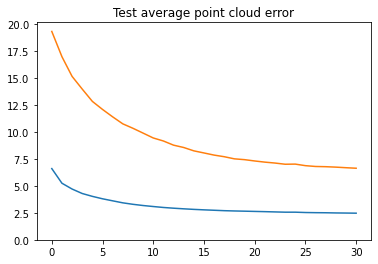}
\caption{Evolution of the loss functional and the error in the predictions with respect to the ground truth for the 3D reconstructions in subsection \ref{subsubsec: example 3d 1}.  The experiment corresponds to the 4-th row in Table \ref{talbe: expermients 1}.}
\label{fig: training 3d loss}
\end{figure}

\begin{figure}[t]
\centering
\begin{subfigure}[c]{.48\textwidth}
\centering
\includegraphics[scale=.4]{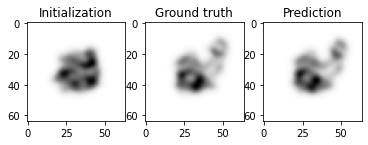}
\includegraphics[scale=.4]{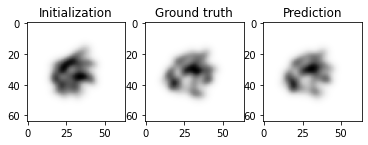}
\includegraphics[scale=.4]{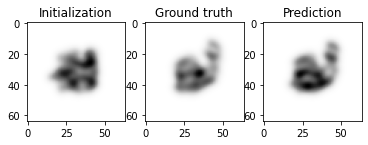}
\caption{Clean 2D tomographic projections of the reconstructed 3D structures. At the left we see the structure in the given conformation, in the middle the ground true and at the right the reconstruction.}
\end{subfigure}
\hfill
\begin{subfigure}[c]{.48\textwidth}
\centering
\includegraphics[scale=.3]{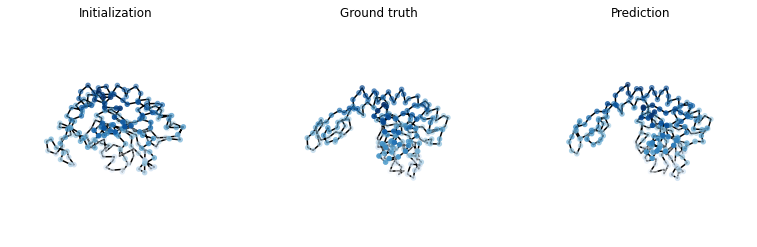}\\
\includegraphics[scale=.3]{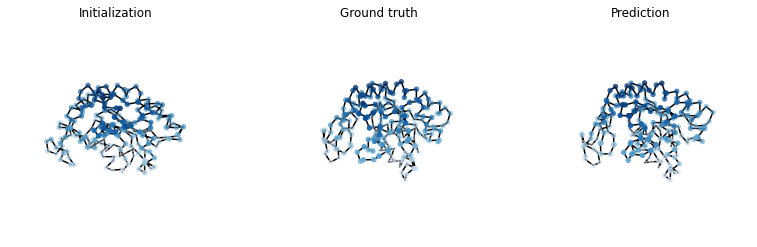}\\
\includegraphics[scale=.3]{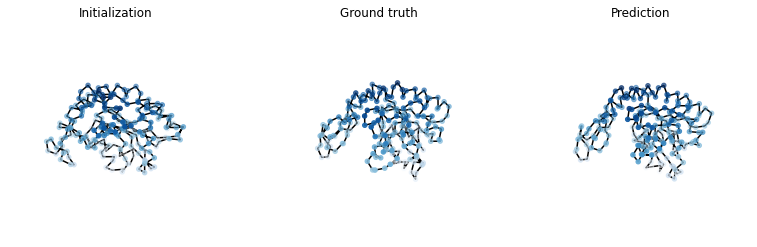}
\caption{At the left we see the structure in the known conformation, in the middle we see the particle in its specific conformation (ground truth) and at the right the 3D atomic structure predicted by means of our method.}
\end{subfigure}
\caption{Some examples of the tomographic reconstruction in subsection \ref{subsubsec: example 3d 1}. The reconstructions are taken from the experiment with noise levels as in 4-th row in Table \ref{talbe: expermients 1}.}
\label{fig: 3d reconstruction}
\end{figure}

\subsubsection{Three-dimensional structure: SARS-CoV-2 helicase nsp 13}
\label{subsubsec: example 3d 2}

The structure corresponds to the backbone of the molecule SARS-CoV-2 helicase nsp 13.  This is a chain structure with 590 C-$\alpha$ atoms, a much bigger macromolecule than the one in the previous subsection. 
The different conformations are taken from a trajectory simulated in \cite{shaw2020molecular} using molecular dynamics.  We have used 400 frames in which the backbone is continuously deformed (see figure \ref{fig: backbone trajectory 2}).
See figure \ref{fig: 3d struct data CTF 2} for three examples of the synthetic cryo-EM images that we have used in our experiments.

In the whole trajectory, the distance between adjacent C-$\alpha$ atoms in the backbone is approximately constant,  with average $3.85$ Angstroms.
In the reconstruction of the atomic structure, we have used a constant inter-atomic distance $\delta = 3.85$.
This choice produces small errors in the reconstructions as the inter-atomic distances in the dataset (ground truth) slightly vary due to small vibrations of the atoms.
However, these errors do not seem to be very relevant in our reconstruction (see Figure \ref{fig: 3d reconstruction 2}).

\begin{figure}[t]
\centering
\includegraphics[scale=.4]{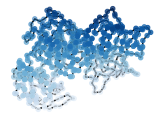}
\hspace{20pt}
\includegraphics[scale=.4]{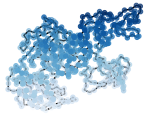}
\hspace{20pt}
\includegraphics[scale=.4]{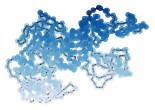}
\hspace{20pt}
\includegraphics[scale=.4]{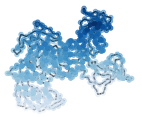}
\caption{The backbone of the macromolecule in four frames of the MD trajectory used in the numerical experiments in subsection \ref{subsubsec: example 3d 2}.}
\label{fig: backbone trajectory 2}
\end{figure}

\begin{figure}[t]
\centering
\includegraphics[scale=.5]{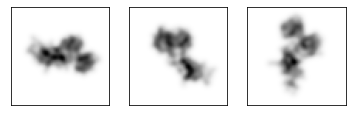}

\centering
\includegraphics[scale=.5]{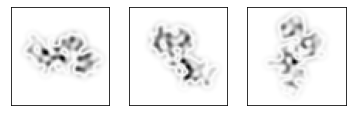}

\centering
\includegraphics[scale=.5]{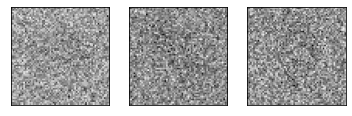}
\caption{Three cryo-EM images from the dataset used in the experiment corresponding to the 4-th row in Table \ref{talbe: expermients 2}. In the top row we see the clean tomographic projections of three particles, in the middle row we see the tomographic projections after the convolution with the point spread function, and in the bottom row we see the images after adding Gaussian noise. The SNR is $0.019$.}
\label{fig: 3d struct data CTF 2}
\end{figure}

\begin{figure}
\centering
\begin{subfigure}{.24\textwidth}
\centering
\includegraphics[scale=.3]{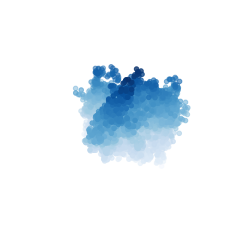}
\caption{$[\phi^{(1)}, \phi^{(2)}, \phi^{(3)}]$}
\end{subfigure}
\hfill
\begin{subfigure}{.24\textwidth}
\centering
\includegraphics[scale=.3]{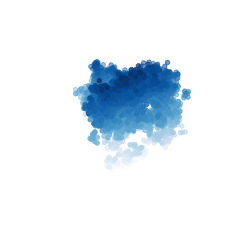}
\caption{$[\phi^{(1)}, \phi^{(2)}, \phi^{(4)}]$}
\end{subfigure}
\hfill
\begin{subfigure}{.24\textwidth}
\centering
\includegraphics[scale=.3]{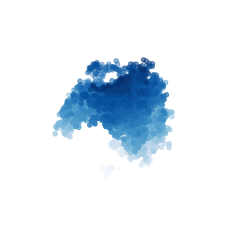}
\caption{$[\phi^{(1)}, \phi^{(3)}, \phi^{(4)}]$}
\end{subfigure}
\hfill
\begin{subfigure}{.24\textwidth}
\centering
\includegraphics[scale=.3]{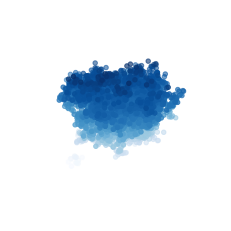}
\caption{$[\phi^{(2)}, \phi^{(3)}, \phi^{(4)}]$}
\end{subfigure}
\caption{Representation of the eigenvectors $\phi^{k}\in \R^n$ of the graph Laplacian constructed from the synthetic dataset of 3D structures in subsection \ref{subsubsec: example 3d 2}.  Each plot can be seen as an approximation of the manifold of conformations $\mathcal{M}$ projected in different three-dimensional linear spaces.}
\label{fig: eigenvec 3d 2}
\end{figure}

In Table \ref{talbe: expermients 2} we see the accuracy of the numerical experiments carried out for this 3D structure at different noise levels. 
For each experiment, we have computed 20 iterations of SGD with a batch size of 100 and a learning rate of $0.5$.
In figure \ref{fig: training 3d loss 2} we see the evolution of the loss functional over the iterations of the SGD algorithm. In the plot at the right, we see the evolution of the error of our reconstructed point clouds with respect to the ground truth (see formulae \eqref{max point cloud error} and \eqref{avg point cloud error}). These plots correspond to the fourth row in Table \ref{talbe: expermients 2}.
As before, we have used the PyTorch implementation of the SGD algorithm. This time,  it took around 80 minutes to compute 20 epochs on an Intel Core(TM) i7-7700 CPU at 3.60GHz with 32 GB of RAM memory.
See figure \ref{fig: 3d reconstruction} for some example of the tomographic reconstructions of the 3D structures presented in this subsection.

\begin{table}
\begin{tabular}{c|c|c||c|c|c|c}
SNR & Low-dim.  noise & 3Dvol px & Training loss & Test loss & Avg. error & Max. error \\
\hline
  0.0746 & 0.5 & 16 & 9.1768 & 0.1813 & 3.1334 & 14.6131 \\
  0.0786 & 0.5 & 32 & 9.1938 & 0.1860 & 3.1044 & 14.5323 \\
  0.0783 & 1 & 32 & 9.1844 & 0.1898 & 3.1290 & 15.7472 \\
  0.019 & 0.5 & 16 & 36.2054 & 0.2012 & 3.3200 & 15.6157 \\
\end{tabular}
\caption{Results of the numerical experiments in subsection \ref{subsubsec: example 3d 2} at different noise levels.
SNR denotes the ratio between the variance of the clean tomographic projection and the cryo-EM images used in the reconstruction.  Low-dim. noise denotes the variance of the Gaussian noise added to the low-dimension representation of the particles used to construct the graph Laplacian. 
We have tested if the results would improve by using higher resolution volumes in the construction of the graph, but it doesn't seem to have a great effect. 
Avg. error denotes the average error between the predicted atom position in the atomic model and the ground truth (see formula \eqref{avg point cloud error}), and Max. error is the maximum distance between the predicted atom positions and the ground truth (see formula \eqref{max point cloud error}).}
\label{talbe: expermients 2}
\end{table}

\begin{figure}[t]
\centering
\includegraphics[scale=.3]{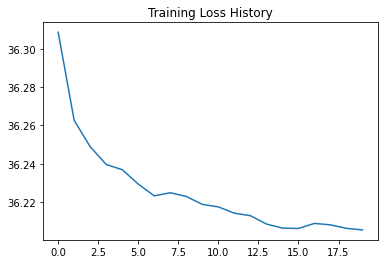}
\includegraphics[scale=.3]{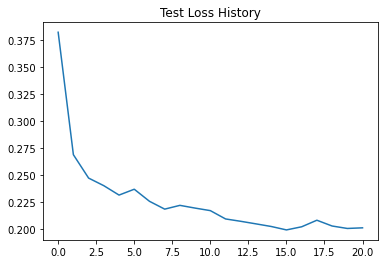}
\includegraphics[scale=.3]{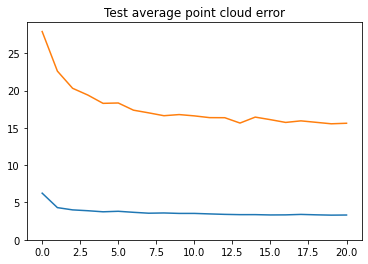}
\caption{Evolution of the loss functional and the error in the predictions with respect to the ground truth for the 3D reconstructions in subsection \ref{subsubsec: example 3d 2}.  The experiment corresponds to the 4-th row in Table \ref{talbe: expermients 2}.}
\label{fig: training 3d loss 2}
\end{figure}

\begin{figure}[t]
\centering
\begin{subfigure}[c]{.48\textwidth}
\centering
\includegraphics[scale=.4]{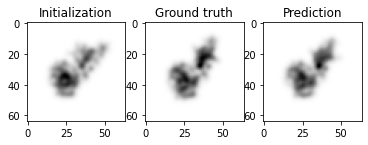}
\includegraphics[scale=.4]{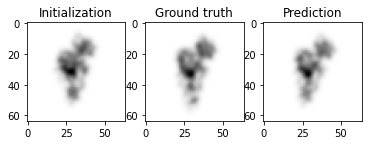}
\includegraphics[scale=.4]{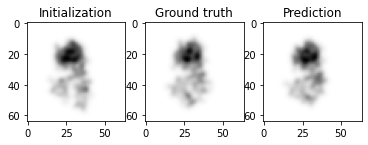}
\caption{Clean 2D tomographic projections of the reconstructed 3D structures. At the left we see the structure in the given conformation, in the middle the ground true and at the right the reconstruction.}
\end{subfigure}
\hfill
\begin{subfigure}[c]{.48\textwidth}
\centering
\includegraphics[scale=.3]{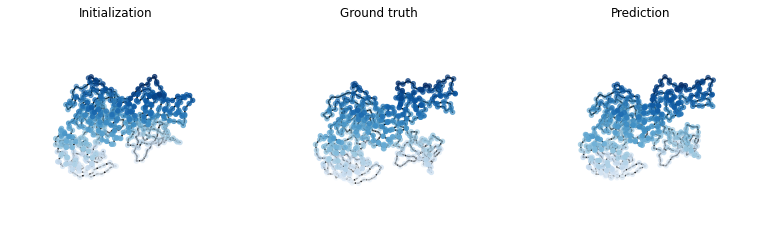}\\
\includegraphics[scale=.3]{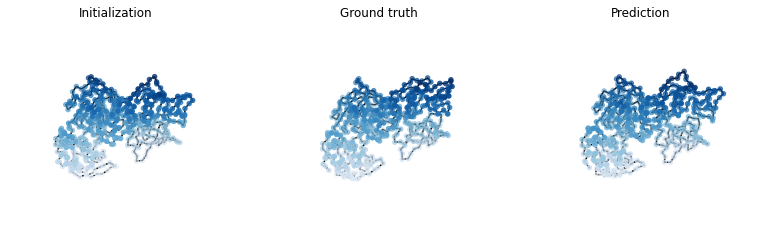}\\
\includegraphics[scale=.3]{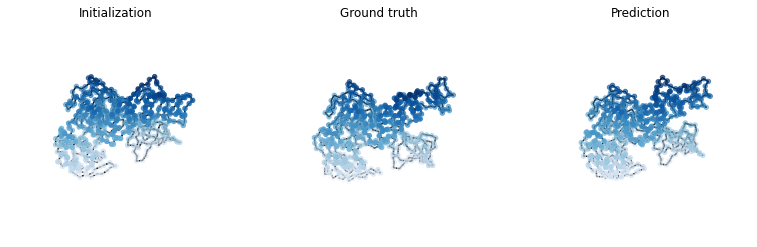}
\caption{At the left we see the structure in the known conformation, in the middle we see the particle in its specific conformation (ground truth) and at the right the 3D atomic structure predicted by means of our method.}
\end{subfigure}
\caption{Some examples of the tomographic reconstruction in subsection \ref{subsubsec: example 3d 2}. The reconstructions are taken from the experiment with noise levels as in 4-th row in Table \ref{talbe: expermients 2}.}
\label{fig: 3d reconstruction 2}
\end{figure}

\section{Conclusions and future steps}
\label{sec: conclusions and future}

In this work we present a method to recover the heterogeneous 3D structure of a flexible macromolecule from a cryo-EM dataset  and a given known conformation.
The method consists in estimating the deformation of the 3D structure in each cryo-EM image with respect to the given known conformation.
The core idea is to combine a known technique in manifold learning, used to approximate the manifold of conformations,  with a parametrization of the atomic structures satisfying a structural prior, in order to construct the function which maps the manifold of conformations to the corresponding atomic structure.

When combining these two ingredients, we use the following pieces of information, which can be obtained from existing techniques or databases:
\begin{enumerate}
\item An estimation of the pose of the particle in each cryo-EM image, which can be obtained from RELION \cite{scheres2012relion}, cryoSPARC \cite{punjani2017cryosparc} or FREALIGN \cite{lyumkis2013likelihood}.
\item A low-dimension representation of the conformations in each cryo-EM image. This can be obtained by means of a PCA analysis as in \cite{anden2018structural} or by using the representation of the images in the latent space of a variational autoencoder, as for instance in cryoDRGN \cite{zhong2021cryodrgn}.
\item We need to have access to the atomic model of the macromolecule in a specific conformation. This one can be obtained from the protein database PDB, or from an AlphaFold prediction of the 3D structure.
\item In order to reduce the number of parameters to be estimated in our approach, we can exploit further knowledge about the macromolecule such as secondary structures and other rigid substructures.
\end{enumerate}

On one hand, in most of the existing methods in single-particle cryo-EM,  the problem of recovering heterogeneous structures is addressed by estimating the 3D density of the macromolecule of interest in each conformation. On the other hand,  biological knowledge and new computational tools provide detailed information about the structure of the macromolecule, typically corresponding to one specific conformation.
In this work we attempt to close the gap between the both approaches by combining the information extracted from the cryo-EM dataset, represented by the first two elements in the above list, with the biological knowledge, represented by the other two elements.
As a result, we obtain an estimation of the atomic model for each of the conformations in the heterogeneous cryo-EM dataset. In the recent work \cite{rosenbaum2021inferring}, the same problem is addressed by means of a variational auto-encoder approach.

An important approach to analyse conformational deformations of flexible macromolecules is Normal Mode Analysis NMA \cite{doruker2000dynamics, tama2004flexible,  tama2004normal,  tama2001conformational, tirion1996large}. 
NMA is based on the harmonic dynamics of a potential energy function around a minimum energy conformation. The possible deformations of the atomic model are obtained in NMA from the principal eigenvectors, or normal modes,  of the Hessian matrix of the energy functional, and the corresponding frequencies. 
The space generated by the normal modes can therefore be interpreted as a local approximation of the manifold of conformations of the macromolecule around the  minimum energy conformation.
This might be in connection with the manifold learning technique used in our approach, in which the approximation of the manifold of conformation is obtained from the cryo-EM data.
However, the nature of the approximation is different in both cases. Whereas  the  manifold learning technique consists on a global approximation of the spectral properties of the  manifold of conformations, 
NMA gives a local approximation of the possible deformations, which can be seen as the tangent space of the manifold at the initial conformation.

Numerical experiments in synthetic data show the potential of our method.
However,  the accuracy of the predictions given by our method may rely on the availability and the accuracy of the information listed above. 
In the following, we list some future steps that may continue from this work, in order to make our method applicable to real cryo-EM datasets:
\begin{enumerate}
\item In our numerical experiments, the matrix of similarities between the paritcels is constructed by means of a Gaussian kernel, which gives rise to a dense matrix. Additionally, due to the reduced size of our datasets (4000 cryo-EM images), we can accurately compute the eigenvectors of the Laplacian matrix. 
However, real cryo-EM datasets are considerably larger ($\sim 10^{5-6}$ images), which makes it computationally infeasible. This can be overcome by considering a sparse matrix of similarities and by using methods to approximate the principal eigenvectors (such as randomized SVD \cite{halko2011finding} or Nystr{\"o}m method \cite{williams2000using}). 
The sensibility of our method to these approximations needs to be analysed.
\item In some cases, the pose estimation of the particle in a cryo-EM image may depend on the specific conformation.  This needs to be taken into account in the parametrization of the atomic structure that we use in our method, which assumes that the pose of the particle is independent of the conformation.  
This might be overcome by updating the pose estimations after the application of every gradient step in the optimisation algorithm.
\item The sensibility of our method to the low-dimension representation of the conformation needs to be further investigated.  In particular,  it would be interesting to compare the manifold of conformations obtained from PCA coordinates and the one obtained from the latent space representation in a VAE.
\item Finally, in this work we only treat the case of atomic structures forming a single chain. The method can of course be adapted to consider more complex  atomic models. For a protein, it should be possible to recover, not only the deformation of the backbone, but also the orientation of the side chains.
\end{enumerate}

\appendix

\section{Discrete Frenet Frames}
\label{appdx: atomic model param}

In this section, we  check that any atomic model (point cloud $\{ z_1, z_2, \ldots , z_m \} \in \R^{3m}$) satisfying the discrete curve property
\begin{equation}
\label{discrete curve prop}
\exists \delta >0 \quad \text{such that} \quad
\| z_{j+1}  - z_j \| = \delta, \qquad \forall j\in \{ 1, \ldots , m-1 \},
\end{equation}
where $\{ z_1, z_2, \ldots , z_m \}$ are the positions of the atoms in the atomic model,
can be represented by the parameters $(\Theta, \Psi, \hat{z} ,\hat{F})$ introduced in subsection \ref{subsec: structure spectral decomp}.
The construction of the discrete curve using this parameters is based on a discrete version of the Frenet Frames, which is extensively used in the description of smooth curves, and we use the same approach and notation as in \cite{hu2011discrete}.
This representation of the atomic structure, in which the parameters $(\hat{z}, \hat{F})$ represent the spatial location and the orientation of the structure and the parameters $(\Theta, \Psi)$ represent the shape of the discrete curve up to translations and rotations is crucial in our approach.
Although this representation of discrete curves by using the torsion and bond angles is standard and can be found in many works, we include it here for completeness.

As mentioned in subsection \ref{subsec: structure spectral decomp}, we need to select a reference atom $j_0\in \{1,\ldots , m-1\}$ (a reference point in the discrete curve), which will serve as initial condition to construct the rest of the point cloud.
This choice is arbitrary, but it is preferable to select an atom from a non-flexible part of the molecule (a part of the structure invariant over the different conformations). 

Let us define the map
\begin{equation}
\label{backbone param def}
\begin{array}{cccc}
\mathbf{Z} : & \mathcal{A} \times \mathcal{I} & \longrightarrow & \R^{3m} \\
\noalign{\vspace{2pt}}
& ( \Theta, \Psi, \hat{z}, \hat{F}) & \longmapsto &
\mathbf{Z} (\Theta, \Psi ,  \hat{z}, \hat{F}),
\end{array}
\end{equation}
where  $\mathbf{Z} (\Theta, \Psi ,  \hat{z}, \hat{F}) = \{ z_1, z_2, \ldots , z_m\} \in \R^{3m}$ is the solution to the dynamical system\footnote{The product $e_3 F_j$ denotes the matrix multiplication of the row vector $e_3 = (0,0,1)$ by the square matrix $F_j\in SO(3)$. Hence, $e_3F_j$ simply represents the third row of the matrix $F_j$.}
\begin{equation}
\label{dynamical system DFF}
\begin{cases}
z_{j+1} = z_j + \delta e_3 F_j & j\in \{ 1, \ldots , m-1\} \\
F_j = R (\theta_j, \psi_j) F_{j-1} & j\in \{ 2, \ldots , m-2\} \\
\text{with} \ z_{j_0}=\hat{z} \in \R^3 \ 
\text{and} \ F_{j_0} = \hat{F} \in SO(3),
\end{cases}
\end{equation}
with $\delta>0$ fixed, $e_3 = (0,0,1)$ and $R (\theta_j, \psi_j)\in SO(3)$ being the rotation matrix
\begin{equation}
\label{R_j}
R  (\theta_j, \psi_j) = 
\left(\begin{matrix}
\cos \psi_j \cos \theta_j & \cos \psi_j \sin \theta_j & - \sin \psi_j \\
-\sin \theta_j & \cos \theta_j & 0 \\
\sin \psi_j \cos \theta_j & \sin \psi_j \sin \theta_j & \cos \psi_j
\end{matrix}\right),
\end{equation}
with $\Theta = (\theta_2, \theta_3, \ldots , \theta_{m-1})$
and $\Psi = (\psi_2, \psi_3, \ldots , \psi_{m-1})$.
We recall the abbreviated notation $\mathcal{A} := [-\pi , \pi]^{m-2} \times [-\pi,\pi]^{m-2}$ and $ \mathcal{I}:= \R^3 \times SO(3)$ that we already used in subsection \ref{subsec: structure spectral decomp}.

It is not difficult to see that the image of $\mathbf{Z}$ is contained in the set of point clouds $\{ z_1, \ldots , z_m\}\in \R^{3m}$ satisfying \eqref{discrete curve prop}.
Next we prove that the map $\mathbf{Z}$ is indeed onto, i.e., any point cloud satisfying \eqref{discrete curve prop} can be represented as $\mathbf{Z}(\Theta, \Psi, \hat{z}, \hat{F})$.
Although this result is well-known, we include it here for completeness and also to show the construction and notation of the discrete Frenet frames that we use throughout the paper.

\begin{lemma}
\label{lem: discrete curve param}
Let $m\in \N$ and $\delta>0$, and let $\{ z_1, z_2, \ldots , z_m\}\in \R^{3m}$ be any point cloud satisfying \eqref{discrete curve prop}.
Then, there exist $(\Theta , \Psi , \hat{z}, \hat{F})\in \mathcal{A}\times \mathcal{I}$ such that
 $\mathbf{Z} (\Theta,\Psi,\hat{z}, \hat{F}) = \{ z_1, z_2, \ldots , z_m\}$.
\end{lemma}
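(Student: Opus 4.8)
The plan is to produce the required data $(\Theta,\Psi,\hat z,\hat F)$ explicitly from the given point cloud, following the classical discrete Frenet frame recipe, and then to check that the dynamical system \eqref{dynamical system DFF} driven by this data reproduces $\{z_1,\ldots,z_m\}$. The one conceptual point to keep in mind throughout is that we are \emph{not} trying to write the change-of-frame rotation $F_jF_{j-1}^{-1}$ as a member of the two-parameter family $\{R(\theta,\psi)\}$ — that is impossible, since $SO(3)$ is three-dimensional — but only to match the single datum that the dynamical system actually forces, namely the third row of $F_j$, which must equal the unit tangent $\delta^{-1}(z_{j+1}-z_j)$; the in-plane orientation of each frame stays a free gauge that we fix as we go. With that in mind the argument reduces to the fact that spherical coordinates parametrize the unit sphere.

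Concretely, I would first set $\hat z:=z_{j_0}$ and define the tangent vectors $t_j:=\delta^{-1}(z_{j+1}-z_j)$ for $j=1,\ldots,m-1$, which are unit vectors exactly by the discrete-curve property \eqref{discrete curve prop}. To avoid any fuss with the mid-chain initial condition, I would build all frames by a single forward pass: choose any $F_1\in SO(3)$ whose third row is $t_1$ (possible, since $t_1$ is a unit vector and can be completed to a positively oriented orthonormal basis), and then, inductively, given $F_{j-1}$, let $c:=t_jF_{j-1}^{T}\in\R^3$ be the coordinate vector of $t_j$ in the orthonormal basis formed by the rows of $F_{j-1}$. Since $\|c\|=\|t_j\|=1$, there exist $\psi_j\in[0,\pi]\subset[-\pi,\pi]$ and $\theta_j\in[-\pi,\pi]$ with $c=(\sin\psi_j\cos\theta_j,\sin\psi_j\sin\theta_j,\cos\psi_j)$ — the spherical coordinates of $c$, with $\theta_j$ set to $0$ in the degenerate case $c=\pm e_3$. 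Setting $F_j:=R(\theta_j,\psi_j)F_{j-1}\in SO(3)$, one reads off from \eqref{R_j} that the third row of $F_j$ is $e_3R(\theta_j,\psi_j)F_{j-1}=cF_{j-1}=t_j$, as required. Finally put $\Theta:=(\theta_2,\ldots,\theta_{m-1})$, $\Psi:=(\psi_2,\ldots,\psi_{m-1})$ and $\hat F:=F_{j_0}$, so that $(\Theta,\Psi,\hat z,\hat F)\in\mathcal{A}\times\mathcal{I}$.

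It then remains to verify $\mathbf Z(\Theta,\Psi,\hat z,\hat F)=\{z_1,\ldots,z_m\}$. The frame recursion $F_j=R(\theta_j,\psi_j)F_{j-1}$ of \eqref{dynamical system DFF} is invertible in each step, so together with the initial condition $F_{j_0}=\hat F$ it has a unique solution, which by construction is the sequence $(F_1,\ldots,F_{m-1})$ just built. Likewise the position recursion $z_{j+1}=z_j+\delta e_3F_j$ with $z_{j_0}=\hat z$ has a unique solution; and since $\delta e_3F_j=\delta t_j=z_{j+1}-z_j$ for every $j$, that solution is the original $\{z_1,\ldots,z_m\}$, which finishes the proof.

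As for where the work really sits: there is essentially no hard estimate here, and the only thing to get right is the inductive step above — recognizing that only the tangent direction is constrained while the rest of the frame is gauge, handling the ``pole'' case where consecutive tangents are parallel or antiparallel (so the torsion angle $\theta_j$ is not unique and one simply picks a value), and checking the third row of $R(\theta_j,\psi_j)$ against \eqref{R_j}. It is worth remarking, as in \cite{hu2011discrete}, that $\psi_j$ is the bond angle and $\theta_j$ the torsion angle of the discrete curve at vertex $j$, and that the freedom in the choice of $F_1$ is precisely the usual gauge freedom of discrete Frenet frames; none of it affects the point cloud that is recovered.
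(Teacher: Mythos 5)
Your proof is correct, but it takes a genuinely different route from the one in the paper. The paper's proof constructs the canonical discrete Frenet frames explicitly: it sets $t_j=\delta^{-1}(z_{j+1}-z_j)$, $b_j=(t_{j-1}\times t_j)/|t_{j-1}\times t_j|$, $n_j=b_j\times t_j$, stacks these as the rows of $F_j$, forms the transition matrices $R_j=F_{j+1}F_j^{\ast}$, and then invokes the argument of \cite{hu2011discrete} to assert that each such $R_j$ lies in the two-parameter family, $R_j=R(\theta_j,\psi_j)$ with $\cos\theta_j=b_{j+1}\cdot b_j$ and $\cos\psi_j=t_{j+1}\cdot t_j$. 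You instead observe that the position recursion only ever reads the third row $e_3F_j$ of each frame, so it suffices to build \emph{some} admissible frame sequence whose third rows are the tangents; you do this by a forward gauge-fixing pass, writing the coordinates of $t_j$ in the frame $F_{j-1}$ in spherical coordinates and defining $F_j:=R(\theta_j,\psi_j)F_{j-1}$, so that membership in the two-parameter family holds by construction rather than by a structural property of the Frenet frames. What each approach buys: the paper's construction produces the canonical frames and identifies $\theta_j,\psi_j$ with the geometrically meaningful torsion and bond angles used throughout the text, whereas yours is more elementary and self-contained (no appeal to \cite{hu2011discrete} is needed for surjectivity of $\mathbf{Z}$) and, importantly, it covers the degenerate case $t_{j-1}\times t_j=0$ of consecutive collinear bonds, where the binormal in \eqref{binormal vec def} is $0/0$ and the paper's proof is silent --- a case that actually arises in the paper's own two-dimensional example, where most rotation angles vanish. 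The price is that your frames are only determined up to the initial gauge choice of $F_1$ and the arbitrary choice of $\theta_j$ at the poles, so they need not coincide with the discrete Frenet frames; since the lemma only claims surjectivity of $\mathbf{Z}$, this is harmless.
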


\begin{proof}
We use the same construction of the discrete Frenet Frames as  in \cite{hu2011discrete}.
Let  $\{ z_1, \ldots , z_m\}\in \R^{3m}$ be a point cloud satisfying \eqref{discrete curve prop} for some given $\delta >0$.
We need to compute the Frenet Frames at every point $z_j$ for $j\in \{1, \ldots , m-1\}$.
First we compute the sequence of unitary vectors $\mathbf{T}:=\{t_j\}_{j=1}^{m-1}$  given by
$$
t_j =\dfrac{ z_{j+1} - z_j}{\delta} , \qquad
\forall j\in \{ 1, \ldots , m - 1  \}.
$$
These are the directions of each segment in the discrete curve.
Next, we compute the sequence of binormal vectors $\mathbf{B}:=\{b_j\}_{j=1}^{m-1}$, given by
\begin{equation}
\label{binormal vec def}
b_j =
\dfrac{t_{j-1} \times t_{j}}{| t_{j-1} \times t_j |}
\qquad \forall j\in \{2, \ldots , m-1\} \qquad \text{and} \qquad
b_1 = b_2,
\end{equation}
and the sequence of normal vectors $\mathbf{N}:= \{  n_j\}_{j=0}^{m-1}$ as
$$
n_j = b_j \times t_j,
\qquad
\forall j \in \{ 1 , \ldots , m-1\}.
$$
Finally we define the sequence of Frenet frames as $\mathbf{F}:= \{ F_j \}_{j=1}^{m-1}$, where each frame $F_j\in SO(3)$ is the $3\times 3$ matrix that has $n_j$, $b_j$ and $t_j$ as rows, i.e.
\begin{equation}
\label{Frenet Frame def}
F_j:=
\begin{bmatrix}
n_j \\ b_j \\ t_j
\end{bmatrix},
\qquad
\forall j \in \{ 1 , \ldots , m-1\}.
\end{equation}

Given the sequence of $m-1$ Frenet frames $\mathbf{F}:= \{ F_j \}_{j=1}^{m-1}$, we can compute the sequence of $m-2$ rotation matrices
$\mathbf{R}:= \{ R_j\}_{j=2}^{m-1}$ given by
\begin{equation}
\label{Transition matrix def}
R_j := F_{j+1} F_j^\ast \qquad 
\forall i \in \{ 2, \ldots  , m -2\}.
\end{equation}
One can readily prove that the sequence $\{z_1, z_2, \ldots , z_m\}$ satisfies
$$
\begin{cases}
z_{j+1} = z_j + \delta e_3 F_j & j\in \{ 1, \ldots , m-1\} \\
F_j = R_j F_{j-1} & j\in \{ 2, \ldots , m-2\}.
\end{cases}
$$
Using the same arguments as in \cite[Section 3.B]{hu2011discrete}, each rotation matrix $R_j\in \mathbf{R}$, constructed as in \eqref{Frenet Frame def}--\eqref{Transition matrix def},  is given by
$R_j = R(\theta_j , \psi_j)$ defined in \eqref{R_j},
where
$\theta_j\in [-\pi,\pi]$ is the torsion angle
$$
\cos \theta_j = b_{j+1}\cdot b_j,
$$
and  $\psi_j\in [0, \pi]$ is the bond angle, and satisfies
$$
\cos \psi_j = t_{j+1}\cdot t_j.
$$
Note that, in view of \eqref{binormal vec def},  we have $\theta_1 = 0$.

Hence, given any sequence of $m$ points $\{ z_1,  z_2, \ldots , z_m\} \in\R^{3m}$ satisfying \eqref{discrete curve prop}, we can compute the associated sequence of torsion and bond angles $[\Theta, \Psi]\in \mathcal{A}$, which along with the position $\hat{z}$ and the orientation $\hat{F}$ of the curve at the $j_0$-th atom yield $\mathbf{Z} (\Theta, \Psi , \hat{z}, \hat{F}) = \{ z_1, \ldots , z_m\}$.
\end{proof}

\section{Manifold spectral representation}
\label{appdx: manifold spectral representation}

In this section,  we describe the method to construct the graph Laplacian, from the cryo-EM dataset, which is used in subsection \ref{subsec: approx spectr decomp} to approximate the spectral properties of the unknown manifold of conformations $\mathcal{M}$. 
As outlined in \cite{moscovich2020cryo}, 
 the low-resolution reconstruction of the heterogeneous particles in the dataset obtained by the method in \cite{anden2018structural} can be used to construct a weighted graph.
Each vertex in the graph corresponds to a particle in the dataset, and the weights in the edges joining any two vertices are estimates of the affinity between the low-dimensional reconstructions of the 3D densities (i.e. the similarity between the underlying conformation of the particles).

Let us choose $N' < N$, where $N\times N$ is the resolution in pixels of the cryo-EM images and $N'\times N' \times N'$ is the resolution of the voxel respresentation of the 3D densities that we aim to reconstruct.  As indicated in \cite[Subsection 3.3]{moscovich2020cryo}, this method is limited to a low-resolution reconstruction, i.e. $N'\ll N$.
In \cite{anden2018structural}, the volume density associated to each particle $\hat{U}_i\in \R^{N'\times N'\times N'}$ is estimated as
$$
\hat{U}_i = \hat{\mu} + \hat{V}_q \beta_i,
$$
where $\hat{\mu} \in \R^{N'\times N'\times N'}$ is the estimated average density, $\hat{V}_q \in [\R^{N'\times N'\times N'}]^q$ is a tensor with the $q$ principal components of the estimated covariance matrix $\hat{\Sigma}$ (the so-called eigenvolumes)  and $\beta_i\in \R^q$ are the PCA coefficients associated to the $i$-th particle.
Here, the volumes $\hat{U}_i$ are reconstructed independently of the viewing direction of each particle, and then, the PCA components $\beta_i$ depend only on the conformation of the underlying particle. 
They can actually be used as a low-dimensional representation of the underlying conformation.
In view of the second part of Assumption \ref{assumption: discrete curve},  these PCA coefficients $\beta_i\in\R^q$ can be seen as a discrete approximation of a $d$-dimensional manifold $\mathcal{M}\subset \R^{q}$, with $d<q$.

We stress that any dimensionality reduction of the 3D structures of the particles might be used, instead of the PCA coefficients, for the construction of the graph.
We only need that the low-dimension representations of the volumes are invariant under rotations and translations.
For instance, the representation of the particles in the latent space obtained by a trained variational auto-encoder \cite{rosenbaum2021inferring,zhong2021cryodrgn} may work as well.

We now construct a weighted graph using the low-dimension representation of the conformation in each particle, denoted by $\beta_i\in \R^q$.
The weight $W_{ij}$ between any two vertices $i$ and $j$ in the graph must represent the similarity between the underlying conformations of the particles $i$ and $j$.
In the numerical experiments presented in section \ref{sec: numerical experiments}, we  used a Gaussian kernel weights of the form
$$
W_{ij} = \operatorname{exp} \left[\dfrac{-\| \beta_i - \beta_j\|^2}{2\sigma^2}\right], 
$$
for some $\sigma>0$ fixed.  
It is well-known that the computational complexity to compute the eigenvectors of a graph Laplacian can be significantly reduced when the associated matrix is sparse. This can be achieved by setting to $0$ all the weights $W_{ij}$ below a certain threshold.
In the numerical experiments in subsection \ref{sec: numerical experiments}, due to the rather small size of the synthetic datasets (only 4000 images), we could use a dense matrix of similarities.

Another possibility is to consider binary weights obtained by applying a symmetric $k$ Nearest Neighbours (KNN), 
$$
W_{ij} := 
\begin{cases}
1, & \text{if $\beta_i\in NN_k(\beta_j)$ or $\beta_j\in NN_k(\beta_i)$} \\
0 & \text{otherwise},
\end{cases}
$$
where $\beta_i\in NN_k(\beta_j)$ means that $\beta_i$ is one of the $k$ nearest neighbours of $\beta_j$.
This choice provides a sparse matrix of similarities, which may increase the computational efficiency when computing the eigenvectors of the graph Laplacian.

Once we have constructed the matrix of similarities between the conformations of the particles in the cryo-EM dataset, we need to define an associated Laplacian.
In our numerical experiments we have used the symmetric normalised graph Laplacian, defined as
$$
L:= D^{-1/2}(D-W)D^{-1/2},
$$
where $D$ is the degree matrix, i.e. the diagonal matrix with entries given by $D_{ii}=\sum_{j=1}^n W_{ij}$.
Finally, the eigenvectors \eqref{matrix eigenvectors} are obtained as the eigenvectors of $L$ associated to the $K$ smallest eigenvalues.

\bibliographystyle{abbrv}
\bibliography{mybibfile}

\end{document}